\newcommand{\ifConferenceVersion}{\iffalse}
\newcommand{\ifJournalVersion}{\iftrue}
\newcommand{\comment}[1]{}
\newcommand{\InConference}[1]{#1}
\newcommand{\InJournal}[1]{}
\newcommand{\InConference}[1]{}
\newcommand{\InJournal}[1]{#1}
\newtheorem{fact}{Fact}
\newtheorem{theorem}{Theorem} 
\newtheorem{lemma}[theorem]{Lemma}
\newtheorem{corollary}[theorem]{Corollary}
\newcommand{\Pro}[1]{\mathbf{Pr}\!\left[\,#1\,\right]}                      
\newcommand{\E}[1]{\mathbf{E}\!\left[\,#1\,\right]}                      
\newcommand{\BigO}[1]{\ensuremath{\mathcal{O}\!\left(#1\right)}}
\newcommand{\Decay}{\ensuremath{\mathsf{Decay}}\xspace}
\newcommand{\GD}{\ensuremath{\mathsf{Green\hbox{-}Decay}}\xspace}
\newcommand{\GB}{\ensuremath{\mathsf{GD\hbox{-}Broadcast}}\xspace}                      
\newcommand{\BB}{\ensuremath{\mathsf{Balls\hbox{-}into\hbox{-}Bins}}\xspace}
\newcommand{\GGB}{\ensuremath{\mathsf{BB\hbox{-}Broadcast}}\xspace}   
\newcommand{\SINGLE}{\ensuremath{\mathsf{Single}}\xspace}   
\newcommand{\etal}{{\it et~al.}}
	\title{Broadcast in Radio Networks:  Time vs. Energy Tradeoffs.}	
	\author{Marek Klonowski}{Wroclaw University of Science and Technology}{marek.klonowski@pwr.edu.pl}{}{}
	\author{Dominik Pajak}{Massachusetts Institute of Technology}{pajak@csail.mit.edu}{}{}
	\authorrunning{M. Klonowski and D. Pajak}%mandatory. First: Use abbreviated first/middle names. Second (only in severe cases): Use first author plus 'et al.'
	\subjclass{\ccsdesc[100]{Theory of computation~Design and analysis of algorithms~Distributed algorithms}, \ccsdesc[100]{Networks~Network types~Ad hoc networks}}% mandatory: Please choose ACM 2012 classifications from https://www.acm.org/publications/class-2012 or https://dl.acm.org/ccs/ccs_flat.cfm . E.g., cite as "General and reference $\rightarrow$ General literature" or \ccsdesc[100]{General and reference~General literature}. 
	\keywords{radio networks; broadcast; energy efficiency; distributed algorithm;}%mandatory
\begin{document}
	
	\InJournal{
\date{}
\title{
%Broadcasting algorithms in radio networks with unknown topology - energetic perspective  
Broadcast in Radio Networks:  Time vs. Energy Tradeoffs.
\thanks{The work of the first  author was supported by Polish National Science Center grant 2013/09/B/ST6/02258. The work of the second author was supported by Polish National Science Center grant 2015/17/B/ST6/01897.}
}

\author[1]{Marek Klonowski}
\author[2]{Dominik Pajak}
\affil[1]{Wroclaw University of Science and Technology, Poland, E-mail: marek.klonowski@pwr.edu.pl}
\affil[2]{Massachusetts Institute of Technology, USA, E-mail: pajak@mit.edu}
}

%\author{
%	Marek Klonowski\footnote{}\and
%	Dominik Paj\k{a}k\footnote{Massachusetts Institute of Technology, USA, E-mail: pajak@mit.edu}}
\maketitle

\begin{abstract}
In wireless networks, consisting of battery-powered devices, energy is a costly resource and most of it is spent on transmitting messages. Broadcast is a problem where a message needs to be transmitted from one node to all other nodes of the network. We study algorithms that can work under limited energy measured as the maximum number of transmissions by a single station. The goal of the papnper is to study tradeoffs between time and energy complexity of broadcast problem in multi-hop radio networks.  We consider a model where the topology of the network is unknown and if two neighbors of a station are transmitting in the same discrete time slot, then the signals collide and the receiver cannot distinguish the collided signals from silence.

%In this paper we discuss energetic complexity aspects of broadcasting problem in a multi-hop radio network without collision detection and with unknown topology. 

We observe that existing, time efficient, algorithms are not optimized with respect to energy expenditure. We then propose and analyse two new randomized energy-efficient algorithms.  Our first algorithm works in time $O((D+\varphi)\cdot n^{1/\varphi}\cdot \varphi)$ with high probability and uses $O(\varphi)$ energy per station for any $\varphi \leq \log n/(2\log\log n)$ for any graph with $n$ nodes and diameter $D$. Our second algorithm works in time $O((D+\log n)\log n)$ with high probability and uses $O(\log n/\log\log n)$ energy. 

We prove that our algorithms are almost time-optimal for given energy limits for graphs with constant diameters by constructing lower bound on time of $\Omega(n^{1/\varphi} \cdot \varphi)$. The lower bound shows also that any algorithm working in polylogaritmic time in $n$ for all graphs needs energy $\Omega(\log n/\log\log n)$.

%Our main goal is to investigate relations between the time of execution (time complexity) and the energy consumption (energetic complexity) of optimal broadcast algorithm. 

%As in a long line of previous papers,  the energetic complexity is understood as the maximal number of transmissions over all stations in the course of the protocol execution. 
%This is motivated by the fact that we need all devices working and the lifetime of the system is stopped when the first devices is out of energy. 

%At first we point that the classic broadcasting algorithms optimal with respect to completion time are far from being energy-efficient. Then we present energy-optimal  (with a matching lower bound) for the class of protocols with (expected) polylogarithmic time of execution. Finally we present several results for the  class of algorithms with constant energetic complexity. 

\end{abstract}

%\vspace*{-3mm}
\section{Introduction}
%\vspace*{-1mm}
The problem of broadcast consists in delivering a single message from a source to all the nodes of a communication network. In multi-hop networks, neighboring stations can send messages to each other but when two neighbors of one station are sending at the same time, then these transmissions interfere and the messages are not delivered. Such a situation is in our model indistinguishable from silence (no collision detection). That is a station can locally distinguish only the case where exactly one neighbor transmits from all the other situations. The broadcast problem is fundamental in radio networks because it can be used to learn the topology of the network or as a subprocedure to other, more complex problems, like multi-message broadcast or gossiping~\cite{ChlebusGLP01}. It also allows to emulate single-hop (networks where any two stations can exchange messages) algorithms in multi-hop networks~\cite{Bar-YehudaGI91}. 

Two most important parameters of an algorithm in radio networks is the time complexity, measured as the number of steps necessary to complete the execution, and energy complexity, which is the maximum number of rounds in which a station is transmitting. For single-hop networks both time and energy complexity of broadcast algorithms has been well studied. However, in the more general case, for multi-hop networks, only time complexity was analyzed for a long time. Only very recently the first article appeared studying time and energy in multi-hop networks~\cite{pettie_energia}. We aim at showing algorithms that minimize the time as well as the energy in multi-hop networks. We want to also show a tradeoff between time and energy for broadcasting protocols. This will allow greater flexibility when designing algorithms by decreasing the maximum energy expenditure of a station at a cost of the runtime of the algorithm. Clearly, minimizing the energy cost can sometimes be a critical aspect of real-life systems as they are often composed of small, cheap, battery-powered devices whose batteries cannot be easily recharged or 
replaced. For such systems it may be reasonable to sacrifice time and 
save some energy of the stations.

%This is motivated by the fact that in many real systems, devices constituting the radio network  have limited energy resources and moreover in some cases it is very hard to replace their batteries. 

%\paragraph{Organization of this paper}

%The rest of this paper is organized as follows. In section~\ref{model2} we define the model considered in our paper.  Most important related work is summarized in 
%Section~\ref{related2}. 

%In Section~\ref{Old2} recall some previous broadcasting protocols  and analyze them from the energy usage perspective.  Section~\ref{green2} contains a new energy-efficient protocol broadcasting protocol that needs  sublogarithmic maximal  number of transmissions with execution time close to optimal. Lower bounds are included in Section~\ref{lower2}. 
%In Appendix we present remaining proofs and provide several technical details. 
%\vspace*{-3mm}
\subsection{Model and Problem Statement}\label{model2}
%\vspace*{-1mm}
In this paper we consider a radio network represented as an undirected, connected graph $G = (V,E)$,  where the nodes symbolize stations and the edges are bidirectional communication links between them. By $n$ we denote the number of nodes of the graph and by $D$ its diameter. We assume that the nodes are given value of $n$ and the energy limit $\varphi$. The stations do not know the topology of the network, are identical and do not have any labels.  Symmetry between the stations can be in our model broken %mkl using randomness 
as the stations have access to independent sources of random bits.
%of $n$ nodes without collision detection. The network is represented as a graph $G=(V,E)$,

Time is divided into discrete \textit{rounds} and all the stations know the number of the current round. The model is synchronous and in each round each node either transmits a message or listens. Each station receives a packet from its neighbors only if it listens in a given  round and \textbf{exactly one} of its neighbors is transmitting. We say that a \SINGLE occurs in such a round. If zero or more than one neighbor of $v$ transmits, then $v$ receives no message.

The single-message broadcast problem is defined as follows. Initially some node, called \textit{originator} has a message and the goal is to deliver this message to all the nodes in the network. We assume that the nodes that have not received the message are not allowed to make any transmissions. In our setting broadcast can be also seen as a wakeup of the network. 
\InConference{\subparagraph{Energy metrics}}
\InJournal{\paragraph{Energy metrics}}
We define $e_v$, an \textit{energetic effort} of a station $v\in V$, as the number of rounds when $v$ transmitted. 
Note that both successful as well as unsuccessful (due to collisions) transmissions count. We will say that algorithm uses energy at most $E$ if $\max_{v\in V} e_{v} \leq E$. In other words we aim at limiting the energetic expenditure of \textbf{all} stations that are present in the network since we need all stations working. We will consider Monte Carlo algorithms using energy $E$ and time $T$ working with probability $p$ which will be understood that the algorithm always terminates after at most $T$ steps and each station uses at most $E$ energy and the broadcast is successful with probability at least $p$. The same definition was used for example in~\cite{Kopelowitz17,KlonowskiKZ12,ESA03}.

In some of the existing protocols the energy expenditure is a random variable and in order to compare it with our solutions we will analyse their energy metric defined as the expected maximum amount of energy spent by a station: $E[\max_{v\in V} e_{v}]$. Such a definition was used for example in~\cite{KardasKP13}. Note that in some articles in radio networks, also listening to the channel costs energy (see e.g.,~\cite{BenderKPY16,pettie_energia,ChangKPWZ17}). However in this work we assume that the transmission of a packet costs much more energy than reception of it and therefore we aim at minimizing only the number of transmissions by each station. That is, we assume that energy cost of listening is negligible comparing to transmitting. Such assumption  is particularly justified for systems with large distances between devices.

%The \textit{energetic complexity} of the algorithm is defined as  $E[\max_{v\in V} e_{v}]$. Note that  $\max_{v\in V} E[e_{v}]\leq  E[\max_{v\in V}e_{v}]$, and  the inequality is strict, except some artificial examples.  That is, 
% Let us note that such measure has been used among others in \cite{ALGOS09,ESA03,KSELECK, Kopelowitz17,KardasKP13}.  

%Let us note that energy complexity is closely related to \textit{message complexity} (e.g.,~\cite{Peleg}, Sec. 2). Nevertheless both notions are substantially different
%\Dominik{Dlaczego sa related i dlaczego sa different? MAREK: Zakomentowalem. W sumie to byla suma wszytskich wiadomosci ale nie wiem czy warto o tym pisac . Nasza metryka jes standardowa i tyle }.

\InConference{\subparagraph{Notation}}
\InJournal{\paragraph{Notation}}
For any $k$, set of integers $\{1,2,\dots,k\}$ is denoted by $[k]$. By $\log x$ we denote logarithm at base $2$ of $x$ and by $\ln x$  the natural logarithm.

%\InConference{\vspace*{-3mm}}
\subsection{Our results}
%\InConference{\vspace*{-1mm}}
In this paper we present two energy-efficient broadcast algorithms and a lower bound. 
We first show an universal algorithm \GGB in which the available energy is a parameter. This allows us to obtain a complete tradeoff between time and energy for any fixed energy between constant and $\log{n}/\log\log{n}$.  The second algorithm \GB is a modification of the broadcast algorithm from~\cite{Bar-YehudaGI91}. The goal of the second algorithm is to perform broadcast in the same (almost-optimal) time as in~\cite{Bar-YehudaGI91} but reduce its energy complexity. The obtained algorithm has energy complexity $O(\log{n}/\log\log n)$ which we will later show to be the minimum energy complexity for any algorithm with time polylogarithmic in $n$. We also present a lower bound which shows that our algorithms are almost time-optimal for graphs with constant diameter. %\MAREK{Przeformatowalem tabelke, ok ? Ale jak nie lubisz to oczywiscie sie nie upieram }
Finally we show a lower bound of $\Omega(n^{3/2})$ for algorithms using at most two transmissions. 
\begin{table}\centering
	\begin{tabular}{lllll}\toprule[1pt]
		\textbf{\textsc{Time}} & \textbf{\textsc{Energy}}  & \textbf{\textsc{Prob.}} & \textbf{\textsc{Remarks}} & \textbf{\textsc{Ref.}} \\\midrule[1pt]
		$\BigO{(D + \log n)\log n}$  & $\BigO{\frac{\log n}{\log\log n}}$ & $1-1/n$ &&  Thm~\ref{thm:gd_broadcast} \\\hline
		$\BigO{\left(D + \frac{\log n}{\alpha\log\log n}\right)  \frac{\log^{1+\alpha}n}{(\alpha-1)\log\log n}}$ & $\frac{\log (n/\epsilon)}{\alpha\log\log n} + o(1)$ & $1-\epsilon$ & $\alpha > 1$ & Thm~\ref{thm:bb_broadcast}  \\\hline
		$\Omega\left(\frac{\log^{1+\alpha}n}{\alpha \log\log n}\right)$ & $\frac{\log n}{\alpha\log\log n}$ & const. & constant $D$ & Corr~\ref{cor:lower}\\\hline		
		$\BigO{(D + \varphi)\cdot n^{1/\varphi}\cdot \varphi}$ & $\left\lceil\varphi\cdot \left(1 + \frac{\log{2/\epsilon}}{\log{n}}\right)\right\rceil $  & $1-\epsilon$ & $\varphi \leq \frac{\log{n}}{2\log\log{n}} $  & Thm~\ref{thm:bb_broadcast} \\\hline
		$\Omega(n^{1/\varphi}\cdot \varphi)$ & $\varphi$  & const. & constant $D$ &
		Corr~\ref{cor:lower}\\\hline
		$\Omega(n^{3/2})$ & $2$  & $1-1/n$ & $D = \Theta(n)$ &
		Thm~\ref{thm:energy2}\\
		\bottomrule[1pt]
	\end{tabular}
	\caption{Summary of our results. For example by setting $\varphi = \frac{\sqrt{\log n}}{\log\log n}$ shows that there exists an algorithm with energy $\BigO{\sqrt{\log n}}$ and time $\widetilde{\mathcal{O}}(D\cdot 2^{\sqrt{\log n}})$ whereas time $\Omega(2^{\sqrt{\log n}})$ is required for such energy for some graphs with constant diameter. Similarly for energy $\BigO{\log\log n}$, our algorithm uses $\widetilde{\mathcal{O}}(D\cdot n^{1/\log\log n})$ time steps.}
\end{table}

%LOCAL BROADCAST\MAREK{dowolny - he he nie ten jezyk}
%	\begin{table}[H]\centering
%	\begin{tabular}{ll}\toprule[1pt]
%		Time & Energy \\\midrule[1pt]
%		$O( \log^2 n)$  & $O\left(\frac{\log n}{\log\log n}\right)$ \\\hline
%		$\Omega(\log^2 n)$ & dowolna \\\hline
%		$O\left(\frac{\log^{2+\alpha}n}{(\alpha\log\log n)^2}\right)$ & $O\left(\frac{\log n}{\alpha\log\log n}\right)$ \\\hline
%     	$\Omega\left(\frac{\log^{1+\alpha}n}{\alpha \log\log n}\right)$ & $\frac{\log n}{\alpha\log\log n}$ \\\hline
%     	$O(n^{1/\varphi}\cdot \varphi^2)$ & $O(\varphi)$  \\\hline
%     	$\Omega(n^{1/\varphi}\cdot \varphi)$ & $\varphi$ \\\bottomrule[1pt]
%	\end{tabular}
%
%\end{table}
%\InConference{\vspace*{-3mm}}

\subsection{Related work}\label{related2}
%\InConference{\vspace*{-1mm}}

%\InJournal{\paragraph{Broadcasting protocols}}
%\InConference{\subparagraph{Broadcasting protocols}}

Energy aspect of broadcast in known and unknown networks has been investigated by Kantor and Peleg in~\cite{KantorP16} using the same energy-efficiency metric as in our paper. For the model with unknown topology they presented a protocol that uses $k$ energy and is completed  in $\BigO{ (D+\min\{D\cdot k, \log n \}) \cdot n^{1/(k-1)} \log n }$ rounds  for $k>1$ and $\BigO{D\cdot n^2 \log n}$ for $k=1$ w.h.p. Moreover they demonstrated $\Theta(\log n)$-shot broadcasting protocol that terminates in $\BigO{  D\log n  + \log ^2 n }$ rounds.

Since authors of~\cite{KantorP16} use the same model, the results can be compared with ours. First, let us note that their protocols are based on a different constructions. For time $\BigO{  D\log n  + \log ^2 n }$ our algorithm \GD uses $\log\log n$ times less energy. We also show our energy bound is tight and energy $\Theta(\log n/\log\log n)$ is the always required for any algorithm has time polylogarithmic in $n$. We can also compare our results that use any energy. Algorithm of Kantor and Peleg works with probability $1- n^{-1/(k-1)}$. If we use the same probability of success then our algorithm \GGB will have time complexity $\BigO{(D + k)n^{1/(k-1)}k}$ which is by a factor of $\log n / k$ faster. 

%For energy 
%$\Theta (\frac{\log n}{\log\log n})$ our protocol \GD is faster of a multiplicative factor 
%between $\log n$ and  $\frac{\log n}{\log\log n}$ dependently on the network diameter. For energy 
%$o(\frac{\log n}{\log\log n})$ using protocol \GGB we gain a asymptotically logarithmic factor when compared to he %protocol from~\cite{KantorP16}.  

%Let us also stress that in the current paper we provided also a lower bound for gossiping in  unknown topology settings.

In~\cite{BerenbrinkCH09}  Berenbrink~\etal~presented $1$-shot broadcast protocol that needs  $\BigO{\log n}$ steps in a random and unknown network w.h.p. For general networks with known diameter $D$, they presented a randomized broadcast algorithm with optimal time $\BigO{D \log (n/D) + \log^2 n  }$ with  expected $\BigO{\log^2 n/\log (n/D) }$ transmissions per node.

To the best of our knowledge, the rest of related work about broadcast  uses a substantially different metrics  of energy consumption or does not take  into account energetic aspects at all. 

Recently Chang~\etal~\cite{pettie_energia} presented various results about energy-efficient broadcast. Their assumptions differs however from ours in two aspects. In~\cite{pettie_energia} the authors assume that the stations which have not received the message are allowed to make transmissions. Secondly the energy complexity in~\cite{pettie_energia} includes the number of times a station listens to the channel. Because of that the minimal energy complexity can be shown to be $\Omega(\log\Delta \log n)$ for graphs with maximum degree $\Delta$. In contrast, in our setting, where only the number of transmissions. The authors in~\cite{pettie_energia} showed an algorithm with time complexity $\BigO{D^{1+\epsilon} \log^{O(1/\epsilon)} n}$ and energy $\BigO{\log^{O(1/\epsilon)} n}$ for the model without collision detection.

The first randomized broadcast protocol (without considering energy-efficiency aspects)   presented by Bar-Yehuda~\etal~\cite{Bar-YehudaGI92} for the model without collision detection works in any multi-hop network in time $\BigO{D\log{n} + \log^2 n}$ with high probability. It is based on \Decay procedure from~\cite{Bar-YehudaGI91} extensively used also in many other papers. Improved protocols with expected time  $\BigO{D\log(n/D) + \log^2 n}$ have been independently proposed by Czumaj and Rytter~\cite{CzumajR06} and Kowalski and Pelc~\cite{KowalskiP05}.  Those results are  optimal due to the lower bounds $\Omega(\log^2 n)$ shown by Alon~\etal~\cite{alon1991lower} and $\Omega(D\log(n/D))$ by Kushilevitz and Mansour~\cite{KushilevitzM98}. Recently Haeupler and Wajc~\cite{HaeuplerW16} proved that if stations are allowed to make transmissions before receiving the message then broadcast can be completed in time $\BigO{D \cdot \frac{\log{n}\log\log n}{\log{D}} + \log^{O(1)}n }$ which was improved by
to $\BigO{D \cdot \frac{\log{n}}{\log{D}} + \log^{O(1)}n }$ Czumaj and Davies~\cite{CzumajD17} . 

In the model with known topology Gasieniec~\etal~\cite{GasieniecPX07} showed a randomized algorithm with time $\BigO{D + \log^2 n}$ and Kowalski and Pelc~\cite{KowalskiP07DC} showed a deterministic algorithm with the same complexity. The algorithms are time-optimal since the bound $\Omega(\log^2 n)$ by Alon~\etal~\cite{alon1991lower} holds also for the known topology. In some papers, broadcast in specific graph classes, like line~\cite{DiksKKP02} or planar graphs~\cite{ElkinK07, GasieniecPX07} was considered.   Kantor and Peleg~\cite{KantorP16} also considered energy efficiency of broadcast with known topology. They demonstrated a protocol executed in   $D + \BigO{k n^{1/2k}\log^{2+1/k} n }$ rounds for a network of $n$ nodes with diameter $D$ and known topology.  For such setting they also presented  a $\Omega(D+k\cdot (n-D)^{1/2k})$~ lower bound  for the number of rounds. Results regarding the model with known topology  can be seen as an extension of results from an earlier results from~\cite{GasieniecKKPS08}~.

\section{Energy-efficient broadcast algorithms}\label{green2}
%\vspace*{-1mm}
In this section we present two new algorithms. The first, \GGB ("Balls-into-Bins Broadcast") can work with arbitrarily small (also constant) available energy. Of course smaller energy leads to higher running time. The second algorithm \GB ("Green Decay Broadcast") is built based on classical Broadcast by Bar-Yehuda et al.~\cite{Bar-YehudaGI92}. Our \GB algorithm works in the same asymptotic time as the original one, but has reduced energy complexity $\BigO{\log n/\log\log n}$. We will later show that any algorithm that works in polylogarithmic time on graphs with constant diameter needs at least this much energy.  

\subsection{Balls into Bins}
Here we introduce a subprocedure $\BB(k)$ that will be used in both our algorithms. In this subprocedure each participating station transmitts in one, randomly chosen out of $k$ time slots. Balls into Bins
(called also Random Mapping) is a classical %mkl problem 
concept %mkl 
in probability theory with applications in load balancing where the studied value is usually the maximum number of balls in any bin (e.g. \cite{kolchin1986random}). In our setting, balls correspond to transmissions by the stations and bins are the time slots, hence we are interested whether some bin contains exactly one ball as this corresponds to a successful transmission. 
\begin{algorithm}[h]
	Let $t$ be the first time slot of the subprocedure\\
	Choose a random number $i \in [0,1,..,k-1]$\\
	\texttt{Transmit} in slot $t+i$
	\caption{$\BB(k)$}
\end{algorithm}

In the following lemma we bound the probability that in a procedure $\BB(k)$ there is a slot that is  chosen by exactly one station. 
\begin{lemma}
	\label{lem:bb}
	If $m$ neighbors of any fixed node $v$ are performing procedure $\BB(k)$  with $k = 24\lceil n^{1/\varphi} \rceil  + 1$ if $1 \leq \varphi < \frac{\log n}{\log\log n}$ and if $1 \leq m \leq 12/\varphi\cdot  n^{1/\varphi}\ln n$ then $v$ receives the message with probability at least $1- \frac{1}{2n^{1/\varphi}}$ for sufficiently large $n$.
	%	\begin{enumerate}
	%		\item\label{itm:first} with $k = 8\lceil \log{n} \rceil$ and if $0 < m < 6\lceil\log{n}\rceil\log\log{n}$ then $v$ will receive the message with probability at least $1 - \frac{1}{2\lceil\log{n}\rceil}$, 
	%		\item\label{itm:second} with $k = 24\lceil n^{\epsilon} \rceil  + 1$ if $\epsilon \geq \frac{\log\log n}{\log n}$ and if $0 < m \leq 12 \epsilon n^{\epsilon}\ln n$ then $v$ will receive the message with probability at least $1- \frac{1}{2n^{\epsilon}}$
	%	\end{enumerate}
	
\end{lemma}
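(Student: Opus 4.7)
Let $X_i$ be the indicator that bin $i$ receives exactly one ball and let $X = \sum_{i=1}^{k} X_i$. Since $v$ fails to receive the message precisely when $X = 0$, my plan is to show $\Pro{X = 0} \leq 1/(2n^{1/\varphi})$ by a case split on $m$. The case $m = 1$ is immediate because then $X = 1$ deterministically.

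For $m \geq 2$ the main tool will be the second moment method. A direct bin-counting calculation yields $\Pro{X_i = 1} = (m/k)(1-1/k)^{m-1}$ and, for $i \neq j$, $\Pro{X_i = X_j = 1} = (m(m-1)/k^2)(1-2/k)^{m-2}$, so $\E{X} = m(1-1/k)^{m-1}$ and $\E{X(X-1)} = ((k-1)m(m-1)/k)(1-2/k)^{m-2}$. Using $(1-1/k)^{m-1} \geq e^{-(m-1)/(k-1)}$ together with the hypotheses $m/k \leq (\ln n)/(2\varphi)$ and $k \geq 24 n^{1/\varphi}$, I get
\[
\E{X} \;\geq\; (1-o(1))\, m \cdot n^{-1/(2\varphi)}.
\]
A Taylor expansion of $(1-2/k)^{m-2}/(1-1/k)^{2(m-1)}$, whose leading correction is $O(1/k)$ in the allowed range of $m$, then gives $\mathrm{Var}(X) \leq \E{X} + O(\E{X}^2/k)$. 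Chebyshev's inequality therefore yields $\Pro{X = 0} \leq \mathrm{Var}(X)/\E{X}^2 \leq 1/\E{X} + O(1/k)$. Because $k \geq 24 n^{1/\varphi}$ the $O(1/k)$ piece is absorbed into $1/(2n^{1/\varphi})$ with a small constant, so the whole expression is $\leq 1/(2n^{1/\varphi})$ as soon as $\E{X} = \Omega(n^{1/\varphi})$, which covers the ``main'' range roughly $m \gtrsim c\, n^{1/\varphi}$.

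For $m$ below that threshold the expectation is too small for Chebyshev, so I plan a direct combinatorial bound instead. In any configuration with $X = 0$ each of the $m$ balls must share its bin with another, so at most $\lfloor m/2\rfloor$ distinct bins are occupied. Counting,
\[
\Pro{X = 0} \;\leq\; \binom{k}{\lfloor m/2\rfloor} \lfloor m/2\rfloor^m / k^m \;\leq\; (em/(2k))^{m/2}.
\]
For $m = 2$ this equals $e/k \leq 1/(2 n^{1/\varphi})$ since $k \geq 24 n^{1/\varphi} \geq 2e\, n^{1/\varphi}$, and the function $m \mapsto (em/(2k))^{m/2}$ is monotonically decreasing on $[2, 2k/e^2]$, so the same bound survives through the entire small-$m$ regime.

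The main obstacle I foresee is stitching the two bounds so that they jointly cover the whole range $1 \leq m \leq (12/\varphi) n^{1/\varphi} \ln n$ with constants matching the $1/(2 n^{1/\varphi})$ target; the Taylor expansion of the covariance requires controlling terms of order $m/k^2$, and at the upper end $m \approx k\ln n/(2\varphi)$ one must still verify that $\E{X}$ remains $\Omega(n^{1/\varphi})$ after the exponential damping $n^{-1/(2\varphi)}$. The hypothesis $\varphi < \log n/\log\log n$ together with ``$n$ sufficiently large'' is what makes this arithmetic go through.
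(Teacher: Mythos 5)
Your small-$m$ argument is fine (and in fact a neat alternative to the paper's ``look at the last two balls'' trick): the union bound $\Pro{X=0}\le\binom{k}{\lfloor m/2\rfloor}(\lfloor m/2\rfloor/k)^m\le(em/(2k))^{m/2}$ is valid and, being decreasing on $[2,2k/e^2]$, it covers all $m$ up to roughly $6n^{1/\varphi}$ with room to spare against the target $1/(2n^{1/\varphi})$. The genuine gap is in the large-$m$ regime, and it is exactly the point you flag in your last paragraph but then wave away: Chebyshev can never give more than $\Pro{X=0}\le\mathrm{Var}(X)/\E{X}^2\approx 1/\E{X}$, and at the top of the allowed range, $m=\frac{12}{\varphi}n^{1/\varphi}\ln n$ with $k\approx 24n^{1/\varphi}$, the exponential damping gives $\E{X}\approx m\,e^{-m/k}\approx\frac{12\ln n}{\varphi}\,n^{1/(2\varphi)}$. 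This is \emph{not} $\Omega(n^{1/\varphi})$ except when $\varphi$ is within a factor of about $2$ of its maximum $\log n/\log\log n$; for any constant $\varphi$, say $\varphi=1$, you get $\E{X}\approx 12\sqrt{n}\ln n$, so Chebyshev yields only $\Pro{X=0}\lesssim\frac{1}{12\sqrt{n}\ln n}$, which is far larger than the required $\frac{1}{2n}$. The hypothesis $\varphi<\log n/\log\log n$ does not rescue this arithmetic; the two regimes you cover (small $m$ combinatorially, $m=\Theta(k)$ by Chebyshev) leave the whole band $m\gtrsim 3k$ up to $\frac{12}{\varphi}n^{1/\varphi}\ln n$ unproved.

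What is missing is a bound on $\Pro{X=0}$ that is exponentially small in $\E{X}$ rather than polynomially small: the singleton-bin events are nearly independent, so the truth is $\Pro{X=0}\approx e^{-\E{X}}$, which is indeed below $1/(2n^{1/\varphi})$ throughout the range, but a second-moment inequality cannot see this. The paper handles this regime by importing such an exponential bound (Lemma~4 of~\cite{KlonowskiKZ12}), namely $\Pro{X=0}\le\exp\bigl(-\frac{m}{2}(1-\frac1k)^{2m-2}\bigr)$, and then checking that the resulting function of $m$ is maximized at the endpoints of the interval, where it evaluates below $1/(2n^{1/\varphi})$. To repair your proof you would need to replace Chebyshev by an argument of this type (Poissonization, negative association plus a Chernoff-type bound, or a Suen/Janson-style correlation inequality) for all $m$ beyond the reach of your combinatorial bound.
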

\begin{proof}
	
	The probability that $v$ receives the message is equal to the probability that when throwing $m$ balls into $k$ bins (independently and uniformly at random) at least one bin contains a single ball. 
	
	The case $m=1$ is trivial. Consider case $1 < m < 12 \lceil n^{1 / (2\varphi)} \rceil$. Take the two last balls and observe that the probability that (at the moment when each of the balls is thrown) either of them lands in one of the bins that have not been already occupied is at least 
	\[
	1-\left(\frac{12 \lceil n^{1/(2\varphi)} \rceil}{k}\right)^2.
	\]
	With probability $1/k$ the last two balls collide hence with probability at least 
	\[
	1-\left(\frac{12 \left\lceil n^{1/(2\varphi)}\right\rceil}{k}\right)^2 - \frac{1}{k} \geq 1 - \frac{1}{2n^{1/\varphi}},
	\]
	at least one of the two last balls ends up as the only ball in one of the bins. For $12/\varphi \cdot n^{1/\varphi} \ln n \geq 	m \geq 12 \lceil n^{1/(2\varphi)} \rceil$ we use ~\cite[Lemma 4]{KlonowskiKZ12} and obtain:
	\begin{align*}
	\Pro{X = 0} &\leq \exp\left(-\frac{m}{2}	\left(1-\frac{1}{k}\right)^{2m-2}\right) = \exp\left(-\frac{m}{2}	\left(\left(1-\frac{1}{k}\right)^{k-1}\right)^{\frac{2m-2}{k-1}}\right)\\
	& \leq \exp\left(-\frac{m}{2}\left(\exp\left(-\frac{2m-2}{k-1}\right)\right)\right) = f(m,k).
	\end{align*}
	The derivative of $f(n,k)$ with respect to $m$ is equal to:
	
	\[
	\exp\left(-\frac{m}{2}\left(\exp\left(-\frac{2m-2}{k-1}\right)\right) - \left(\exp\left(-\frac{2m-2}{k-1}\right)\right)\right)\cdot \left(\frac{m}{k-1} - \frac{1}{2}\right).
	\]
	The function has a single minimum for $2m = k - 1$ and because the derivative is negative for $2m < k - 1$, the maximum value is attained in one of the endpoints of the considered interval. Knowing that $\varphi < \frac{\log n}{\log\log n}$, we get 
	\[
	f(12 \lceil n^{1/(2\varphi)} \rceil,24\lceil n^{1/\varphi} \rceil + 1) \leq \exp\left(-6n^{1/\varphi}\right) \leq \frac{1}{2n^{1/\varphi}}.
	\]
	\begin{align*}
	f(12 \lceil1/\varphi \cdot  n^{1/\varphi} \ln n\rceil,24\lceil n^{1/\varphi} \rceil + 1) &\leq  \exp\left(-6/\varphi \cdot n^{1/\varphi} \ln n\exp\left(-\frac{24\lceil1/\varphi \cdot n^{1/\varphi}\ln n\rceil}{24\lceil n^{1/\varphi} \rceil}\right)\right) \\&\leq \frac{1}{2n^{1/\varphi}},
	\end{align*}

\end{proof} %}

This lemma cannot be easily improved by more than a constant factor, since in Balls into Bins model if we want each bin to contain at least $b$ balls, the total expected needed number of needed balls is close to $k \log k  + (b-1) \cdot k \log\log k$~\cite{Dixie}. And the concentration bound is very strong -- having $k\log k  + k \log\log k + a k$ balls, the probability that each bin contains at least two balls is for large $a$ close to $e^{-e^{-a}}$~\cite{analcomb}.  
%\vspace*{-3mm}
\subsection{Balls into Bins Broadcast}\label{sec:bb}
The goal of this section is to design an algorithm that uses at most $\BigO{\varphi}$ transmissions and has time complexity roughly $\BigO{D \cdot n^{1/\varphi} \cdot \varphi}$ for any $\varphi > 0$. The idea of the algorithm is as follows. We consider the algorithm from the perspective of a fixed node $u$ at the first step when at least one of its neighbors $v$ has the message. The goal is to deliver the message to $u$ with high probability in time $\BigO{n^{1/\varphi}}$. Each participating station (neighbor of $u$ that has the message) chooses a phase being a number chosen according to geometric distribution with parameter $\varphi/ n^{1/\varphi}$. We will show that regardless of how many participants are there, some value will be chosen by at most $\BigO{n^{1/\varphi}/\varphi \cdot \ln n}$ stations. Each phase is a $\BB(n^{1/\varphi})$ procedure. We already know that if the number of participants is $\BigO{n^{1/\varphi}/\varphi \cdot \ln n}$, then the failure probability of such procedure (i.e. no \SINGLE 
transmission occurs ) is at most $n^{-1/\varphi}$. Then, 
repeating it $2\varphi$ times will give us high probability. This intuition is further specified in the following pseudocode and formalized in the next two lemmas.
%\MAREK{no czasownik}

\begin{algorithm}[]\label{alg1}
	$a \gets \left\lceil \frac{\varphi \log{n}}{\log{n} - \varphi \log{\varphi}} \right\rceil $\\
	$k \gets 24 \lceil n^{1/\varphi}\rceil + 1$ \\
	$t_{ph} \gets a k$ \\
	Wait until receiving the message;\\	
	\Repeat{$ \left\lceil\varphi\cdot \left(1 +\frac{\log{\frac{2}{\epsilon}}}{\log{n}}\right)\right\rceil $ times}{
		Wait until $(Time \texttt{ mod } t_{ph}) = 0$\\
		Choose a number $x \sim \mathsf{Geo(\varphi/n^{1/\varphi})}$\\
		Skip $(\min\{x, a\}-1) \cdot k$ rounds\\
		$\BB(k)$
	}
	\caption{$\GGB(\varphi,\epsilon)$}	
\end{algorithm}
%In the algorithm in line $7$, the station chooses a random number $x$. Then it participates in one of $a$ \BB procedures depending on the number $x$. We will show in the next lemma that, regardless of how many stations execute the algorithm in parallel, some number $x$ will be chosen by at most $O(\epsilon n^{\epsilon}\log{n})$ stations. And then this many stations will execute the \BB procedure. Using Lemma~\ref{lem:bb}, this \BB will give us failure probability of at most $n^{-\epsilon}$.
The following lemma shows that, regardless of how many stations execute line $7$ of algorithm \GGB in parallel, some number is chosen by at most $12/\varphi \cdot n^{1/\varphi} \ln n $ stations. 
%\InConference{\vspace*{-1mm}}
\begin{lemma}
	\label{lem:dist}
	For any $1 \leq \varphi < \frac{\log{n}}{\log\log{n}}$ and for $a  = \left\lceil \frac{\varphi \log{n}}{\log{n} - \varphi \log{\varphi}} \right\rceil$, if $1\leq \hat{n} \leq n$ random variables $X_1,X_2,\dots,X_{\hat{n}}$ are chosen from $\mathsf{Geo(\varphi/(n^{1/\varphi}))}$ then with probability at least $1-2/n^{2}$ among $Y= \{Y_{i}\}_{i=1}^{\hat{n}}$  ($Y_{i} = \min\{X_i,a\}$) there is a number $y\in \{1,2,\dots,a\}$ chosen at least once and at most $12/\varphi \cdot  n^{1/\varphi}  \ln n $ times by $Y$ variables.
	pp\end{lemma}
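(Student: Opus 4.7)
The plan is to split the argument along the threshold $M := (12/\varphi)\,n^{1/\varphi}\ln n$. Write $p = \varphi/n^{1/\varphi}$ and, for $y\in\{1,\dots,a\}$, let $C_y := |\{i : Y_i = y\}|$, so that $C_y \sim \mathrm{Bin}(\hat n,q_y)$ with $q_y = p(1-p)^{y-1}$ for $y<a$ and $q_a = (1-p)^{a-1}$. If $\hat n \leq M$, the lemma holds deterministically: $y := Y_1$ is chosen at least once and at most $\hat n \leq M$ times. The entire failure-probability budget $2/n^{2}$ is therefore reserved for the regime $M < \hat n \leq n$.

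In the hard regime, I would pick a target bin $y^{\ast}\in\{1,\dots,a\}$ whose expected count $\hat n\,q_{y^{\ast}}$ lies in a moderate window, say $[c\ln n,\,M/2]$ for an absolute constant $c$, and then apply a multiplicative Chernoff bound to obtain $C_{y^{\ast}}\in[1,M]$ with probability at least $1-2\exp(-\Omega(M))$. Because $M = \Omega(\ln n)$ throughout the admissible range $\varphi < \log n/\log\log n$, this tail is at most $1/n^{\,2+\Omega(1)}$, and a union bound over the at most $a$ candidate bins absorbs the extra factor to produce the claimed $2/n^{2}$ bound. To locate $y^{\ast}$ I would slide over $\{1,\dots,a-1\}$, where $q_y$ decreases geometrically with ratio $1-p$; when even $y=a-1$ yields too large an expectation, one resorts to the capped bin $y=a$, whose mass $q_a/q_{a-1}\approx 1/p$ provides an additional factor-$1/p$ drop that covers the extreme subcase.

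The main obstacle is arguing that such a $y^{\ast}$ really exists inside $\{1,\dots,a\}$ for \emph{every} $\hat n \in (M,n]$: when $\hat n$ is close to $n$, even $\hat n\,q_1 = \hat n p$ can exceed $M$, so $y^{\ast}$ must be pushed toward $a$ and aligned carefully with the capped bin. The driving structural fact is the identity $p^{a} = 1/n$ (up to the ceiling in the definition of $a$), equivalent to the choice $a = \lceil \varphi\log n/(\log n - \varphi\log\varphi)\rceil$; feeding this identity into the geometric product formulas for the $q_y$ is where the bulk of the algebra lives, and where the constant $12$ in the definition of $M$ is tuned so that the Chernoff tail comfortably dominates the union bound.
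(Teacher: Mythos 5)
Your high-level plan (dispose of $\hat n\le M:=12/\varphi\cdot n^{1/\varphi}\ln n$ trivially, then locate a level of the truncated geometric whose expected occupancy is moderate and apply a Chernoff bound) has the same shape as the paper's proof, but the step you defer as ``the main obstacle'' --- the existence of such a level for \emph{every} $\hat n\in(M,n]$ --- is precisely the content of the lemma, and the way you have set it up would not let you complete it. You parametrize the variables by $q_y=p(1-p)^{y-1}$ with $p=\varphi/n^{1/\varphi}$, i.e.\ per-level masses decaying with ratio $1-p\approx 1$ and a capped level $a$ of mass $(1-p)^{a-1}\approx 1$. Under that reading there is no usable decay across the only $a\approx\varphi$ available levels: for $\hat n$ near $n$ and, say, constant $\varphi\ge 3$, each level $y<a$ has expected count $\approx\hat n p=\varphi n^{1-1/\varphi}\gg M$ and the capped level has expected count $\approx\hat n\gg M$, so with high probability \emph{no} value is chosen at most $M$ times --- the statement would simply be false, and no sliding argument can rescue it. The distribution the lemma (and the algorithm) intends, and the one the paper's proof uses, has tails $\Pr[X_j>i]=(\varphi/n^{1/\varphi})^{i}=n^{-ib}$ with $b=\frac1\varphi-\frac{\log\varphi}{\log n}$, so occupancies drop by the huge factor $p=n^{-b}$ per level and $a=\lceil 1/b\rceil$ guarantees the whole range $\hat n\le n$ is reached; your invocation of ``$p^{a}\approx 1/n$'' belongs to this convention, not to the pmf you wrote, and your claim that the capped bin contributes ``an additional factor-$1/p$ drop'' is backwards (under your pmf it is a factor-$1/p$ \emph{increase} in mass; under the correct one it merely continues the factor-$p$ decay, since $q_a=p^{a-1}$ versus $q_{a-1}=(1-p)p^{a-2}$).

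Two further quantitative points. First, the tails are not $\exp(-\Omega(M))$: the level you can guarantee to hit has expected occupancy only $\Theta(\ln n)$, so the Chernoff tails are $n^{-\Theta(1)}$ and the constants matter; moreover no union bound over the $a$ candidate levels is needed (the level is chosen deterministically from $\hat n$, $p$ and $a$), and with tails of order $n^{-2}$ such a union bound would in fact overrun the $2/n^{2}$ budget. Second, the paper sidesteps the delicate single-bin window arithmetic by working with the tail count $Z=|\{j: X_j>i^*\}|$, where $i^*$ is the smallest index with $\hat n\, n^{-i^*b}<4n^{b}\ln n$ (here $n^{b}=n^{1/\varphi}/\varphi$); minimality pins $\E{Z}\in[4\ln n,\,4n^{b}\ln n]$, Chernoff gives $\Pr[Z=0]\le e^{-\E{Z}/2}\le n^{-2}$ and $\Pr[Z>3\E{Z}]\le e^{-2\E{Z}/3}\le n^{-2}$, and $1\le Z\le M$ immediately yields a value in $\{i^*+1,\dots,a\}$ (after capping) chosen at least once and at most $M$ times. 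If you switch to the correct parametrization, your single-bin variant can be made to work along these lines, but as written the proposal has a genuine gap.
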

\begin{proof}
	
	If $\hat{n} \leq 12/\varphi \cdot n^{1/\varphi}\ln{n}$ then the statement trivially follows. Consider the opposite case. We have a set of $\hat{n}$ independent identically distributed random variables $X_1,\dots,X_{\hat{n}} \sim \mathsf{Geo(\varphi/n^{1/\varphi})}$. Observe that since $\varphi < \frac{\log n}{\log\log n}$ then $\varphi \log \varphi < \log n$ and if we denote $b = \left(\frac{1}{\varphi}- \frac{\log{\varphi}}{\log{n}}\right)$ then $b > 0$. Observe that $a = \lceil 1/b \rceil$ (line $1$ of pseudocode). We have for each $j \in \{1,2,\dots,\hat{n}\}$:
	\[
	\Pro{X_j > i} = \varphi^{i}n^{-i / \varphi} = n^{-i\left(\frac{1}{\varphi} - \frac{\log{\varphi}}{\log{n}}\right)} = n^{-i\cdot b},\quad\text{for any }  i = 1,2,\dots,a.
	\]
	
	\noindent
	Thus $\E{|j : X_{j} > i|} = \hat{n} \cdot n^{-i\cdot b}$. We know that $\hat{n} > 12/\varphi \cdot  n^{1/\varphi}\ln{n}$ and $n^b = n^{1/\varphi}/\varphi$ thus $\hat{n} > 4 n^b \ln{n}$. Take the smallest $i^*$ such that $\hat{n} / n^{i^* \cdot  b} <  4 n^b \ln{n}$. Since $b > 0$, such $i^*$ exists and $i^* \geq 1$ and moreover since $\hat{n} \leq n$ then $i^* \leq 1/b \leq a$. Using the minimality of $i^*$ we can write  $\hat{n} = n^{i^* \cdot b} \cdot r$, where $4n^{b}\ln{n} \geq r \geq 4\ln{n}$
	%\MAREK{Tego nie rozumiem ... byc moze dla $\varphi>1$ ale cos %nie widze ...}\Dominik{dodalem wyjasnienie}.
	We define the following variables 
	\[
	Z_j = \left\{\begin{array}{lr}
	1& \quad \text{if } X_j > i^*,\\
	0& \quad \text{otherwise.}
	\end{array}\right.
	\]
	If $Z = \sum_{i=1}^{\hat{n}} Z_i$ then $\E{Z} = r \geq 4 \ln{n}$ and by Chernoff bound:%\MAREK{Ale ktory Chernoff bound - ja mam chyba inne stale ... }\Dominik{z wikipedii :) jest sens wpisac teze tego Chernoffa do jakiegos appendixu? kazdy chyba w to uwiezy ...}\MAREK{Dla mnie OK. Roznice sa w stalych . Te mi sie lekko nie zgadzaly, ale ogolnie to musi byc prawda (asympt.)}
	\[
	\Pro{Z = 0} \leq \Pro{Z  \leq (1-1)\E{Z}} \leq e^{-\frac{\E{Z}}{2}} \leq \frac{1}{n^2},
	\]
	\[
	\Pro{Z > 12/\varphi \cdot n^{1/\varphi}\ln{n}} = \Pro{Z > 12n^{b}\ln{n}} \leq \Pro{Z > 3\E{Z}} \leq e^{-\frac{2 \E{Z}}{3}}\leq \frac{1}{n^2}.
	\]
	Thus with probability at least $2/n^{-2}$ between $1$ and $12/\varphi\cdot n^{1/\varphi}\ln{n}$ values of $X$ variables are at least $i^*$. Hence there is some value of $Y$ variables chosen at least once and at most $12/\varphi \cdot n^{1/\varphi}\ln{n}$ times.
	
	%	Hence if we denote $i^* = 8\lceil1/\alpha\rceil\cdot\lceil1/\epsilon\rceil$
	%	\begin{align*}
	%	\Pro{X_j >  i^*} &\leq n^{-8 \cdot \lceil1/\alpha\rceil\left(1-\frac{\log{(1/\epsilon)}}{\epsilon\log{n}}\right)} \leq n^{-4 \cdot \lceil1/\alpha\rceil\left(1 - \frac{\log\log n - \log(1+\alpha) - \log\log\log n}{(1+\alpha)\log n}\right)}\\
	%	 & \leq n^{-8\cdot\left(\frac{\alpha\lceil1/\alpha\rceil}{1+\alpha} - \frac{\lceil 1/\alpha\rceil}{(1+\alpha)\log\log n}(\log(1+\alpha) + \log\log\log n)\right)}
	%	\end{align*}
	%	Since $\log(1+\alpha) < \alpha$ and $\frac{\alpha\lceil1/\alpha\rceil}{1+\alpha} \geq \frac12$, for any $\alpha > 0$. We get:
	%	\begin{align*}
	%	\Pro{X_j >  i^*} \leq n^{-4\cdot(1-\frac{1}{\log\log n}) + \frac{4\log\log\log n}{(1+\alpha)\log\log n}}
	%	\end{align*}
	%	Finally for $n>2^{16}$ we get $\Pro{X_j >  i^*} \leq n^{-2}$. Hence by the union bound with probability at least $n^{-1}$ we have $X_j \leq i^*$ for all $j = 1,2,\dots\hat{n}$. 
	%	
	%	Observe that if $\hat{n} \leq 12\epsilon n^{\epsilon}\ln{n}$ then, assuming that all values are at most $i^*$, the statement follows. Consider the opposite case $\hat{n} > 12\epsilon n^{\epsilon}\ln{n}$. Denote $ b= \left(\epsilon - \frac{\log(1/\epsilon)}{\log{n}}\right)$. 
	%	hence by the union bound with probability at least $2/n$, the claim follows.
\end{proof}	

For all nodes we define some tree of shortest paths connecting the originator $u$ to all the nodes of the graph. For any node $v$ we denote by $\mathcal{P}_v$ the shortest path from $u$ to $v$ in this tree and by $p(v)$ the second last node on this path. To analyze the complexity of the algorithm we want to bound the speed at which the message progresses along path $\mathcal{P}_v$. We need to first show that with high probability if $p(v)$ receives the message then also $v$ receives it. Denote by $T(p(v))$ the time when $p(v)$ receives the message. Then consider the next $2\varphi$ complete phases that start after $T(p(v))$. We say that a phase is successful for $v$ if it results in a \SINGLE and delivers the message to $v$. Note that $v$ might receive the message before $T(p(v))$ but we say that the successful phase for $v$ is the first after $T(p(v))$.
\begin{lemma}
	\label{lem:ggb}
	For any $n\geq 4$ and $1 \leq \varphi < \frac{\log n}{\log\log n}$ in $\GGB(\varphi,\epsilon)$:
	\begin{enumerate}
		\item each phase is successful with probability at least $1-\frac{1}{2n^{1/\varphi}} - \frac{2}{n^2},$
		\item with probability at least $1- \epsilon/2$ some of the $\left\lceil\varphi\cdot \left(1 +\frac{\log{\frac{2}{\epsilon}}}{\log{n}}\right)\right\rceil $ phases is successful at each node of path $\mathcal{P}_v$ for every $v$.
	\end{enumerate}
\end{lemma}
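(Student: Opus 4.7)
For part~(1), I plan to fix an arbitrary phase and condition on the set of $v$'s neighbors that are currently participating (i.e.\ have already received the message and are still inside their repeat loop of $\GGB$), of which there are some number $m\geq 1$. Each of these stations independently draws $X\sim\mathsf{Geo}(\varphi/n^{1/\varphi})$, lands in group $\min\{X,a\}\in\{1,\ldots,a\}$, and then runs $\BB(k)$ inside its group. Applying Lemma~\ref{lem:dist} with $\hat n = m\leq n$ produces, with probability at least $1-2/n^{2}$, a group index $y^\star$ whose population $m^\star$ lies in the admissible range $[1,\,(12/\varphi)\,n^{1/\varphi}\ln n]$. Since the $\BB(k)$ uniform choices are drawn independently of the geometric variables, Lemma~\ref{lem:bb} applies conditionally on the group structure and ensures that a \SINGLE reaches $v$ during the $\BB(k)$ run of group $y^\star$ with conditional probability at least $1-1/(2n^{1/\varphi})$. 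Combining via $\Pro{A\cap B}\geq 1-\Pro{\neg A}-\Pro{\neg B\mid A}$ yields a per-phase failure probability of at most $2/n^{2}+1/(2n^{1/\varphi})$, establishing~(1).

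For part~(2) I will iterate this bound along the shortest-paths tree. Fix any non-source node $w$ and set $\psi:=\lceil\varphi(1+\log(2/\epsilon)/\log n)\rceil$. Once $p(w)$ has received the message at time $T(p(w))$, the pseudocode commits it to $\psi$ consecutive complete phases, transmitting exactly once in each; hence in each of these $\psi$ phases the set of participating neighbors of $w$ is nonempty (it always contains $p(w)$), so part~(1) applies. Because that bound is uniform in the participating set and in the history, iterating it over the $\psi$ consecutive phases yields
\[
\Pro{\text{no phase succeeds at }w}\;\leq\;\left(\tfrac{2}{n^{2}}+\tfrac{1}{2n^{1/\varphi}}\right)^{\psi}\;\leq\;n^{-\psi/\varphi}\;\leq\;n^{-1-\log(2/\epsilon)/\log n}\;=\;\frac{\epsilon}{2n},
\]
where the first inequality uses $\varphi<\log n/\log\log n$ together with $n\geq 4$ (so the bracket is at most $n^{-1/\varphi}$), and the last equality follows directly from the choice of $\psi$. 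A final union bound over the at most $n-1$ non-source nodes (which between them cover every edge of every $\mathcal{P}_{v}$ in the shortest-paths tree) delivers the total failure probability $\epsilon/2$.

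The main point I expect to have to be careful about is the bookkeeping of conditioning: $T(p(w))$, the identity of $w$'s participating neighbors in each phase, and the geometric assignment into groups are all random objects and are correlated across nodes through the global history of coin flips. The saving grace is that the per-phase bound in~(1) is \emph{uniform} in all of these quantities---it requires only $m\geq 1$ and invokes Lemmas~\ref{lem:dist} and~\ref{lem:bb}, both of which are themselves uniform in their inputs---so the iteration over the $\psi$ phases of any given edge and the union bound over edges go through without having to track those correlations explicitly; the only remaining work is the short algebraic manipulation of $n^{-\psi/\varphi}$ displayed above.
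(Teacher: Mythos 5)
Your proposal is correct and follows essentially the same route as the paper's proof: part (1) combines Lemma~\ref{lem:dist} with Lemma~\ref{lem:bb} to get the per-phase failure bound $\frac{2}{n^2}+\frac{1}{2n^{1/\varphi}}$, and part (2) raises it to the power $\left\lceil\varphi\left(1+\frac{\log(2/\epsilon)}{\log n}\right)\right\rceil$, bounds the result by $\frac{\epsilon}{2n}$ using $n^{2-1/\varphi}\geq 4$, and finishes with a union bound over the nodes (equivalently, the edges of the shortest-path tree). Your explicit handling of the conditioning on the history when multiplying the per-phase bounds is slightly more careful than the paper's appeal to independence, but it is the same argument.
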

%\JournalProof{\ProofThree}{
\begin{proof}
	In any phase, by Lemma~\ref{lem:dist} with probability at least $1-2/n^{-2}$, some procedure $\BB$ is executed by at most $12/\varphi \cdot n^{1/\varphi}\log{n}$ stations. In such a case by Lemma~\ref{lem:bb}, procedure $\BB$ obtains a \SINGLE with probability at least $\frac{1}{2 n^{1/\varphi}}$. Hence, with probability at least $1 - \frac{1}{2 n^{-1/\varphi}} - \frac{2}{n^2}$ the phase is successful (results in a \SINGLE in one of its time slots). Using independence, the probability that $\left\lceil\varphi\cdot \left(1 +\frac{\log{\frac{2}{\epsilon}}}{\log{n}}\right)\right\rceil $ phases are unsuccessful is at most:
	\[\left(\frac{1}{2n^{1/\varphi}}+\frac{2}{n^{2}}\right)^{\left\lceil\varphi\left(1 + \frac{\log{\frac{2}{\epsilon}}}{\log{n}}\right)\right\rceil }\leq  \frac{\epsilon/2}{n}\left(\frac{1}{2} + \frac{2}{n^{2 - 1/\varphi}}\right)^{\left\lceil\varphi\left(1 + \frac{\log{\frac{2}{\epsilon}}}{\log{n}}\right)\right\rceil} \leq \frac{\epsilon/2}{n},\]
	%$\left(1 + n^{-(\varphi - 1)/\varphi}\right) \leq e^{\varphi / }$
	where the last inequality holds because $n^{2-1/\varphi} \geq 4$ thus $1/2 + 2/n^{2-1/\varphi} \leq 1$.
	Hence for any node $v$, the probability that all of the $\left\lceil\varphi \left(1 + \frac{\log{\frac{2}{\epsilon}}}{\log{n}}\right)\right\rceil$ phases executed by $p(v)$ are unsuccessful is at most $\epsilon/2 \cdot n^{-1}$.  Taking union bound over all $n$ stations, we get that with probability at least $\epsilon/2$, some phase is successful at every node. Now observe that since paths $\mathcal{P}$ form a tree this is sufficient to prove $2$.
\end{proof}
%}

The previous lemma shows that with probability at least $1-\epsilon/2$ each station eventually receives the message. This does not prove that $v$ receives the message \textit{from} $p(v)$ but only that during the $\left\lceil\varphi\cdot \left(1 +\frac{\log{\frac{2}{\epsilon}}}{\log{n}}\right)\right\rceil$ phases executed by $p(v)$, node $v$ receives the message (possibly from a different neighbor). But with probability $1-\epsilon/2$ the time until the message reaches $v$ is at most $2\varphi$ phases after $p(v)$ receives the message. Using this, we could bound the total number of phases until each station receives the message by $\BigO{D \cdot \varphi}$. We want however a  better bound of $\BigO{D + \varphi}$ using the fact that on average only a constant number of phases is sufficient to deliver the message from one node to its neighbor. 

\begin{theorem}
	\label{thm:bb_broadcast}
	For any $1 \leq \varphi < \frac{\log n}{\log\log n}$ and $\epsilon > 2n^{-3}$, if $n \geq 4$, then Algorithm $\GGB(\varphi,\epsilon)$ completes broadcast %\Dominik{moze wywalic $2^{64}$ ?}
	\begin{enumerate}
		\item in time $\BigO{\left(D+\varphi\right)\cdot n^{1/\varphi}\cdot\frac{\varphi \log{n}}{\log{n} - \varphi\log{\varphi}}}$,\label{itm:time}
		\item using at most $\left\lceil\varphi \left(1 + \frac{\log{(2/\epsilon)}}{\log{n}}\right)\right\rceil$ energy per station,\label{itm:energy}
		\item with probability at least $1-\epsilon$.
	\end{enumerate}
\end{theorem}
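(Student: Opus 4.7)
The energy bound (item~2) is immediate from the pseudocode: the Repeat loop performs exactly $\lceil\varphi(1+\log(2/\epsilon)/\log n)\rceil$ iterations, and each iteration calls $\BB(k)$, which is a single transmission. For the correctness claim (item~3), Lemma~\ref{lem:ggb}(2) already gives probability at least $1-\epsilon/2$ that for every $v$ some phase of $p(v)$ is successful; since the shortest-path parents $p(\cdot)$ form a tree rooted at the originator, this event implies every node eventually receives the message. Combined with the time event below (also of probability at least $1-\epsilon/2$), a union bound yields item~(3).

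For the time bound, each phase lasts $t_{ph}=ak=O(n^{1/\varphi}\cdot a)$ rounds, and every station terminates $\varphi':=\lceil\varphi(1+\log(2/\epsilon)/\log n)\rceil$ phases after first receiving. Letting $\pi(v)$ denote the phase in which $v$ first receives, the total running time is at most $(\max_v \pi(v)+\varphi')\cdot t_{ph}$, so since the hypothesis $\epsilon>2n^{-3}$ forces $\varphi'=O(\varphi)$, it suffices to show $\pi(v)=O(D+\varphi)$ with probability at least $1-\epsilon/(2n)$ for each fixed $v$ and then union-bound over the $n$ choices. Fix such a $v$ with shortest path $\mathcal{P}_v=(u=w_0,w_1,\ldots,w_{d(v)}=v)$ and write $\pi(v)\leq \sum_{i=1}^{d(v)} N_i$, where $N_i\geq 1$ is the number of phases between $w_{i-1}$ and $w_i$ first receiving. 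By Lemma~\ref{lem:ggb}(1) each phase independently succeeds with probability at least $1-\delta$ for $\delta\leq 1/(2n^{1/\varphi})+2/n^2$; since the $N_i$ depend on disjoint time ranges with independent coins, they are mutually independent and each is stochastically dominated by $\operatorname{Geo}(1-\delta)$.

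The naive bound $N_i\leq \varphi'$ only yields $\pi(v)=O(D\varphi)$, which is the bound the surrounding discussion flags as too loose. To tighten it to $O(D+\varphi)$ I apply a Chernoff bound through the geometric MGF: setting $\lambda=\log(1/(2\delta))$ gives $\E{e^{\lambda N_i}}\leq 1/\delta$, and Markov delivers $\Pro{\pi(v)>d(v)+s}\leq 2^{d(v)+s}\delta^s$. Since $\varphi\leq \log n/\log\log n$ implies $\log(1/\delta)\geq(\log n)/\varphi-O(1)=\Omega(\log n/\varphi)$, choosing $s=\Theta(\varphi(d(v)+\log(n/\epsilon))/\log n)$ drives this tail below $\epsilon/(2n)$; as $\epsilon>2n^{-3}$ gives $\log(n/\epsilon)=O(\log n)$ and $\varphi/\log n\leq 1$ in our regime, $s=O(d(v)+\varphi)=O(D+\varphi)$, as wanted. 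The main obstacle is precisely this concentration step: the improvement from $O(D\varphi)$ to $O(D+\varphi)$ crucially exploits that $\delta=O(n^{-1/\varphi})$ is polynomially small, allowing $\lambda$ to be as large as $\Theta((\log n)/\varphi)$ so that each hop adds only $O(\varphi/\log n)$ expected excess phases and the total slack $d(v)\cdot O(\varphi/\log n)\leq O(D)$ is absorbed.
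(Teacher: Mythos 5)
Your proposal is correct and follows essentially the same route as the paper's proof: energy read off the pseudocode, correctness from Lemma~\ref{lem:ggb}(2), a shortest-path decomposition with each hop dominated by a geometric variable of success probability $1-\BigO{n^{-1/\varphi}}$, and union bounds over vertices combined with $\epsilon>2n^{-3}$. The only difference is that you prove the concentration for the sum of geometrics by an explicit MGF/Chernoff computation (tilt $e^{\lambda}=1/(2\delta)$) instead of invoking the cited tail bound of Janson as the paper does, which is an equivalent, self-contained substitute for that step.
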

%\JournalProof{\ProofFour}{
\begin{proof}
	Claim \ref{itm:energy} follows directly from the construction of the algorithm. Take any vertex $v$ and consider path $\mathcal{P}_v = (u,v_1,v_2,\dots,v_{D_v-1},v)$ of length $D_v \leq D$ from the originator $u$ to $v$. Let us introduce random variables $X^{(v)}_i$ as the number of phases between the reception of the message by $i$-th node on path $\mathcal{P}$ and the first successful phase for $(i+1)$-st node on the path. If none of the $\left\lceil\varphi\left(1 + \frac{\log{\frac{2}{\epsilon}}}{\log{n}}\right)\right\rceil$ phases for $(i+1)$-st node on the path are successful we set $X^{(v)}_i = \infty$. Observe that the number of phases until $v$ receives the message is upper bounded by $\sum_{i=1}^{D_v} X^{(v)}_i$. 
	
	We know by Lemma~\ref{lem:ggb} that each phase is successful independently with probability at least $1-\frac{1}{2n^{1/\varphi}} - \frac{2}{n^2} \geq 1-\frac{1}{n^{1/\varphi}}$. Moreover $X^{(v)}_i \leq \left\lceil\varphi\left(1 + \frac{\log{\frac{1}{2\epsilon}}}{\log{n}}\right)\right\rceil$ for all $v$ and $i$ with probability at least $1 - \epsilon/2$. Observe that conditioned on this event, variable $X^{(v)} = \sum_{i=0}^{D_v} X^{(v)}_i$ is stochastically dominated by a sum of $D_v$ geometric variables $Y^{(v)} = \sum_{i=0}^{D_v} Y^{(v)}_i$ with success probability $1-1/n^{1/\varphi}$.  We have that $\E{Y^{(v)}} = D_v/(1-1/n^{1/\varphi})$.  We can use concentration bound for the sum of geometric variables~\cite[Theorem 2.3]{janson2017tail}. Let $\lambda =  6\left(1 + \frac{\varphi}{\E{Y^{(v)}}}\right) $:
	\[
	\Pro{\sum_{i=0}^{D_v} Y^{(v)}_i \geq  \lambda\E{Y^{(v)}}} \leq \frac{1}{\lambda} \cdot \left(\frac{1}{n^{1/\varphi}}\right)^{(\lambda - 1 -\ln{\lambda})\E{Y^{(v)}}} 
	\]
	since $\lambda > 6$, $\ln\lambda < \lambda/3$ and:
	\[
	\Pro{\sum_{i=0}^{D_v} Y^{(v)}_i \geq  \lambda\E{Y^{(v)}}} \leq \frac{1}{n^{4}}.
	\]
	Observe moreover that $\lambda \E{Y^{(v)}} = \BigO{D + \varphi}$.
	By taking union bound in both cases over all vertices $v$ we get that with probability at least $1- 1/n^3$,  the number of phases until each node receives the message is $\BigO{D + \varphi}$, conditioned on the fact that at least one phase is successful for each node. Since we know that the latter event takes place with probability at least $1- \epsilon/2$ then with probability at least $1-\epsilon$ all the nodes receive the message and the total number of phases is $\BigO{D + \varphi}$. Observe that the stations terminate the algorithm after additional time at most $\left\lceil\varphi\cdot \left(1 +\frac{\log{\frac{2}{\epsilon}}}{\log{n}}\right)\right\rceil \cdot  t_{ph}$. Thus, since $t_{ph} \in \BigO{n^{1/\varphi}\cdot\frac{\varphi \log{n}}{\log{n} - \varphi\log{\varphi}}}$, we obtain the desired result.
\end{proof}	
%}
Observe that if $\varphi \leq \frac{\log{n}}{2 \log\log{n}}$ %\MAREK{Ja nie rozumie dlaczego az $o$ , chyba $O$ wystarczy .?}
%\Dominik{Racja, ale zobacz co sie dzieje jak sie podstawi $\varphi = \frac{\log n}{\log\log n + 1}$}
then $\frac{\varphi \log{n}}{\log{n} - \varphi\log{\varphi}} = \Theta(\varphi)$ and the complexity of the algorithm becomes $\BigO{(D+\varphi)n^{1/\varphi} \cdot \varphi}$.
%\vspace*{-3mm}
\subsection{Green-Decay}
%\vspace*{-1mm}
Algorithm \GGB can operate under a wide range of energy limits, however for the optimal energy $\log n/(\log\log n + 1)$ 
guarantees only %mkl it has running time 
$\BigO{\left(D +\frac{\log n}{\log\log n}\right) \cdot \frac{\log^2 n}{\log\log\log n}}$, which is slower than the algorithms from the literature. In this section we want to develop an algorithm that is less universal (i.e. does not offer parametrized energy vs. time trade-off) but achieves an almost-optimal time $\BigO{(D+\log n)\log n}$ using optimal energy $\BigO{\log n/\log\log n}$.

We will first present \GD which is a simple modification of classical procedure \Decay introduced in~\cite{Bar-YehudaGI91}. It will serve as a subprocedure to our energy-efficient algorithm \GB. In original \Decay, each station 
transmits %mkl participates in a procedure 
for a number of rounds being a geometric random variable. We note that instead of broadcasting in each round of the procedure it is sufficient to broadcast only in the last one. With this we save energy whilst the probability of success remains the same. 
\begin{algorithm}[]
	\texttt{Transmit}\;
	\Repeat{$x = 1$ but at most $k$ times}{
		$x \gets$ $0$ or $1$ with equal probability\;
		\If{$x = 1$}{
			\texttt{Transmit}\;
		}
		
	}
	\caption{$\GD(k)$}
\end{algorithm}

The following theorem is analogous to~\cite[Theorem 1]{Bar-YehudaGI91}. By inspection of the proof from~\cite{Bar-YehudaGI91} we observe that exactly the same proof works for $\GD$. Furthermore it is easy to see that 
the modified procedure %mkl our procedure 
uses only constant energy. That is a station executing modified procedure transmits only once. 
%\vspace*{-1mm}
\begin{theorem}
	\label{thm:gd}
	If $n$ neighbors of station $v$ execute procedure $\GD(k)$ then the probability $\Pro{k,n}$ that $v$ receives the message satisfies:	
	\begin{enumerate}
		\item $\Pro{\infty,n} = \frac23,$
		\item $\Pro{k,n} > \frac{1}{2}$, for $k\geq 2\log n.$
	\end{enumerate}
\end{theorem}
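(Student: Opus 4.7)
The plan is to inherit the analysis of the original $\Decay$ procedure from~\cite[Theorem~1]{Bar-YehudaGI91}, because the probability that $v$ receives the message depends only on the joint distribution of the rounds in which its neighbors transmit, and this distribution in $\GD(k)$ coincides (up to a harmless one-round shift) with the distribution used in~\cite{Bar-YehudaGI91}.

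First I would unfold the pseudocode to observe that a station running $\GD(k)$ independently draws a stop round $T \in \{1,\dots,k\}$ with $\Pro{T=j} = 2^{-j}$ for $j<k$ and a remainder mass at $T=k$, and that its effective transmission occurs at round $T$. Consequently, for each round $j$, the number of transmitting neighbors is distributed as $\mathrm{Bin}(n, 2^{-j})$ (with a similar tail adjustment at the last round). The analogous binomial counts, shifted by one round, control the probability of a $\SINGLE$ in the original $\Decay(k)$, so the same analytical estimates apply unchanged.

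Finally I would quote the quantitative bounds from~\cite{Bar-YehudaGI91}. The probability of a $\SINGLE$ at round $j$ equals $n\cdot 2^{-j}(1-2^{-j})^{n-1}$. Summing these probabilities over $j$ and accounting for the dependence between rounds (each station contributes to at most one round) via the coupling argument of~\cite{Bar-YehudaGI91}, yields the sharp limit $\Pro{\infty,n}=2/3$ for claim~1. For claim~2, choosing $k\ge 2\log n$ ensures that the critical range $n\cdot 2^{-j}\in[1,2]$ lies within the first $k$ rounds, and the direct calculation following~\cite{Bar-YehudaGI91} then gives $\Pro{k,n}>1/2$. The only technical obstacle is handling the cross-round dependence of the $\SINGLE$ events, but since our per-round distribution matches that of~\cite{Bar-YehudaGI91}, the same coupling handles it with no modification.
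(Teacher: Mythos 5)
Your top-level plan---inheriting the bounds from~\cite{Bar-YehudaGI91}---is the same as the paper's, which simply asserts that the proof of Theorem~1 there works verbatim for \GD and adds the observation about constant energy. But the bridge you build has a genuine gap. You argue that, round by round, the number of transmitters in $\GD(k)$ has the same (marginal) distribution $\mathrm{Bin}(n,2^{-(j-1)})$ as in $\Decay(k)$, and that therefore ``the same estimates apply unchanged'' and ``the same coupling'' handles the cross-round dependence. The reception event, however, is existential over rounds (\emph{some} round has exactly one transmitter), so it is determined by the joint law of the transmitter sets across rounds, not by per-round marginals. That joint law is genuinely different: in \Decay a station transmits in the prefix $1,\dots,T$ of rounds ($T$ its geometric stopping time), so for $n\ge 2$ the uncapped procedure reaches $v$ exactly when the maximum of the $n$ stopping times is unique; in \GD a station transmits only in the first slot and in its stopping round, so $v$ is reached exactly when some stopping value is attained by exactly one neighbor. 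These are different events (stopping times $5,5,2$: \GD succeeds, \Decay fails), and equal marginals do not equate their probabilities. Moreover, the ``coupling argument of~\cite{Bar-YehudaGI91}'' you invoke does not exist there: their proof is a recursion/induction on the number of still-active stations of a single procedure, so the sentence dismissing the dependence issue is not an argument.

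What is missing is the actual reason the transfer works, which the paper leaves implicit behind ``by inspection the same proof works.'' Either rerun BGI's recursion for \GD (condition on the first round of coin flips: the deferring stations form the same subproblem, and the stations that stop now produce a \SINGLE iff exactly one of them stops), or couple the two procedures on the same coin flips and note that whenever \Decay has a round with a unique active station, that station transmits alone in \GD in its stopping round; hence the success probability of \GD dominates that of \Decay, and the bounds of~\cite{Bar-YehudaGI91} carry over (for claim~1 this yields $\Pro{\infty,n}\ge 2/3$, which is all that is used; for claim~2 one must also absorb the truncation at $k$, e.g.\ the probability that some stopping time exceeds $k\ge 2\log n$ is at most $n2^{-k}\le 1/n$). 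Any such short argument closes the gap; the marginal-matching step alone does not.
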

%\begin{proof}
%	Observe that the only state of our procedure is the number of participating stations. Hence we can prove the theorem by induction on $n$. Clearly $\Pro{\infty,1} = 1$ and $\Pro{\infty,0} = 0$. Assume that the theorem holds for $0,1,\dots,n-1$. We have
%	
%	\begin{equation}
%	\Pro{\infty,n} = 2^{-n} \cdot \Pro{\infty,n} +  n \cdot 2^{-n} + \sum_{i=2}^n {n\choose i} 2^{-n}\Pro{\infty,n-i}\nonumber
%	\end{equation}
%	By the induction hypothesis:
%	\begin{align} \nonumber
%	(1-2^{-n})\Pro{\infty,n} &\geq  n \cdot 2^{-n} + 2^{-n}\sum_{i=2}^n {n\choose i} \frac{2}{3} \\\nonumber
%	&\geq  n \cdot 2^{-n} + 2^{-n} \frac{2}{3} \left(2^n  - {n\choose 0} - {n\choose 1}\right) \\\nonumber
%	& = \frac{2}{3} + \frac{n-2}{3} \cdot 2^{-n} \geq \frac{2}{3} \nonumber,
%	\end{align}
%	Hence $P(\infty,n) \geq \frac{2}{3}$.
%	\\
%	2. TODO (prawie to samo co w~\cite[Theorem 1(ii)]{Bar-YehudaGI91})
%\end{proof}

%\subsection{Idea of the algorithm}
The high-level idea of \GB is as follows. In $\mathsf{Broadcast}$ algorithm from~\cite{Bar-YehudaGI92} each station participates in $\Theta(\log n)$ \Decay procedures and the expected maximum energy is $\Theta(\log{n}\log\log{n})$. By simply replacing it with \GD (that takes always at most a constant energy per participant) we can reduce the energy complexity from $\log n \log\log n$ to $\log n$. In order to reduce the energy complexity further we observe that a station does not necessarily need to participate in all the $\log n$ procedures \Decay. If some number $x$ of neighbors of $v$ have the message and want to transmit it to $v$ it is sufficient that in at least a constant fraction%\MAREK{$\Theta$ to wlasciwie jest wlasnie frakcja czegos. Takie maslo maslane . Ale sie nie upieram ...} 
of the $\Theta(\log n)$ procedures \Decay at least one among the $x$ stations participate. In our algorithm each station participates in $\Theta(\log n/\log\log n)$ procedures \Decay chosen at random. If $x$ is sufficiently large, at least a constant fraction of procedures \Decay will have at least one participant and the algorithm will work 
correctly with high probability. On the other hand if $x$ is small we can use procedure $\BB$ which gives a probability of success of order $1-1/\log n$, for $\varphi = \log{n}/\log\log{n}$. Hence for small $x$, $\BigO{\log n/\log\log n}$ procedures $\BB$ is sufficient to obtain the high probability
of successful transmission. %mkl
Our algorithm combines \GD and \BB to cover both cases of small and large number of neighbors trying to deliver the message. One more difficulty we need to overcome is that the number of participating (i.e., holding the message) neighbors of $v$ might increase over time. 
\begin{algorithm}[t]
	$ll \gets \lceil \log\log n \rceil$\\
	$k \gets 24\lceil \log n\rceil + 1$\\
	$state \gets \texttt{new}$\\
	Wait until receiving the message;\\
	%	$firstPhase \gets Time/(3\cdot k) \texttt{ mod } ll$\\
	\Repeat{$2\lceil\log{n}\rceil + 2$ times}{
		Wait until $(Time \texttt{ mod } 3 \cdot k) = 0$\\
		$phase \gets Time/(3\cdot k) \texttt{ mod } ll$\\
		\If{$phase = 0$ and $state = $ {\upshape \texttt{new}}}{
			$state \gets \texttt{normal}$ \\
		}{\textbf{endif}}\\
		\If{$state =$ {\upshape \texttt{new}}}{
			$\BB(k)$ \\
			Skip $k$ rounds \\ 
		}
		\ElseIf{$phase = 0$}{
			Skip $k$ rounds \\
			$\BB(k)$\\
		}{\textbf{endif}}
		
		\If{$phase = 0$ or $state =$  {\upshape \texttt{new}}}{
			\lIf{$state =$ {\upshape \texttt{new}}}{
				$state \gets \texttt{normal}$ \\
			}
			$myPhase \gets \mathsf{Random}([0,1,\dots,ll-1])$		    
		}{\textbf{endif}}\\
		\If{$phase = myPhase$}{
			$\GD(k)$}{\textbf{endif}}\\
	}
	
	\caption{$\GB$}
	
\end{algorithm}

Observe that the algorithm executed by a node that received the message consists of $3k = 72\lceil \log n\rceil$ \emph{phases}. Each phase consists of $t_{ph} = \BigO{\log n}$ rounds. In the analysis we group $ll = \lceil \log\log n \rceil$ consecutive phases into an \emph{epoch}. 

Let $T(v)$ denote the time when $v$ receives the message. Note that with a nonzero probability our algorithm may fail to deliver the message to some nodes. For all such nodes $v$ we define $T(v) = \infty$. Bounding the difference between $T(v)$ and $T(w)$ for any neighbors $v$ and $w$ is a key component in the analysis of the algorithm.
%\Dominik{T(v) moze nie byc dobrze zdefiniowane}
%\vspace*{-1mm}
\begin{lemma}
	\label{lem:dom}
	If $n\geq 32$ then for any two neighbors $v,w \in V$ such that $T(v) < \infty$ and $T(v) \leq T(w)$ and for any $1 \leq x \leq 2\lceil\log{n}\rceil + 2$:
	\[
	\Pro{|T(v) - T(w)| \geq (x+2) \cdot t_{ph}} \leq 2^{-x}.
	\]
\end{lemma}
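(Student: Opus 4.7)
My plan is a standard geometric tail argument. Let $t_0$ be the earliest phase boundary strictly after $T(v)$, so $0 \leq t_0 - T(v) \leq t_{ph}$. I aim to show that in each of the $x+1$ consecutive phases starting at $t_0$, independently, $w$ receives the message with probability at least $1/2$. This yields $\Pro{T(w) > t_0 + (x+1) t_{ph}} \leq 2^{-(x+1)} \leq 2^{-x}$, and since $t_0 \leq T(v) + t_{ph}$, we recover $\Pro{T(w) - T(v) \geq (x+2) t_{ph}} \leq 2^{-x}$. The additive $+2$ in $(x+2) t_{ph}$ absorbs (i) the at most one-phase slack between $T(v)$ and $t_0$ and (ii) the usual off-by-one in a geometric tail.

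The per-phase success lower bound is established by case analysis on $N$, the number of neighbors of $w$ that hold the message at the start of the phase; $N \geq 1$ since $v$ does. In the moderate regime $1 \leq N \leq 12 \log\log n \cdot \ln n$, the $\BB$ block executed by the \texttt{new} neighbors of $w$ (first slot-block) or by the \texttt{normal} neighbors at phase $0$ (second slot-block) contains at most $N$ participants, so Lemma~\ref{lem:bb} applied with $\varphi = \log n/\log\log n$ (so that $n^{1/\varphi} = \log n$ and $k = 24\lceil \log n \rceil + 1$) gives $\SINGLE$-probability at least $1 - 1/(2 \log n) \geq 1/2$. In the large regime $N > 12 \log\log n \cdot \ln n$, each \texttt{normal} neighbor of $w$ independently picks $myPhase$ uniformly in $\{0,\dots,ll-1\}$, so the expected number running $\GD(k)$ in the current phase is $N/ll \geq 12 \ln n$; a Chernoff bound puts at least one such participant in play with probability $\geq 1 - n^{-2}$, and Theorem~\ref{thm:gd} (applicable since $k \geq 2 \log n \geq 2 \log N$) then gives success probability greater than $1/2$ in the $\GD$ block.

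The main obstacle is the residual case in which $v$ has already transitioned to \texttt{normal}, the current phase is neither phase $0$ nor $v$'s $myPhase$ for the current epoch, and $N$ is moderate: then $v$ is silent this phase and the $\GD$ block may well be empty. My plan to handle it is to amortize over the entire epoch: within a single epoch $v$ contributes at least one $\BB$ call (at phase $0$, succeeding with probability $\geq 1 - 1/(2 \log n)$) and one $\GD$ call (at $myPhase$, succeeding with probability $\geq 1/2$), using fresh independent randomness, so the per-epoch failure probability is at most $2^{-ll} = 1/\log n$. Over the $(x+2)/ll$ full epochs in the window this already yields failure probability at most $2^{-(x+2)}$; the partial epoch at the start is handled by noting that $v$ is \texttt{new} for the first up to $ll$ phases after $t_0$, during which the per-phase $\BB$ bound applies unconditionally, closing the small-$x$ case. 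Establishing the claimed independence across epochs (fresh $myPhase$ draws) and within phases (fresh $\BB$-slot choices and $\GD$ coin flips) is a tedious but routine bookkeeping step.
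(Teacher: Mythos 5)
Your high-level plan is the same as the paper's: a per-phase bound via the \GD blocks when many neighbors of $w$ hold the message, a per-epoch bound via the phase-$0$ \BB when few do, and a product of failure probabilities over the window. However, two steps have genuine gaps. First, the final accounting does not deliver $2^{-x}$. The window of $x+1$ complete phases does not contain $(x+2)/ll$ full epochs: epochs are aligned to absolute time, so misalignment leaves partial epochs at both ends totalling up to $2(ll-1)$ phases, and charging only full epochs at $2^{-ll}$ each yields at best roughly $2^{-(x+1)+2ll}$, which loses a $\log^2 n$ factor and is vacuous whenever $x=O(\log\log n)$. The paper's proof instead charges \emph{every} epoch intersecting the window, partial or full, a failure of at most $2^{-j}$ where $j\le ll$ is the number of its phases inside the window: in the few-participant case the single \BB already gives $2^{-ll}\le 2^{-j}$, and in the many-participant case each of the $j$ \GD blocks is nonempty with probability $1-n^{-2}$ and then succeeds independently with probability $1/2$; multiplying over epochs gives $2^{-(x+1)}$ up to $(1+2n^{-2})$ factors, hence $2^{-x}$.

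Second, your treatment of the initial partial epoch is incorrect on two counts. A station is in state \texttt{new} for exactly one phase (it switches to \texttt{normal} and draws $myPhase$ during its first executed phase), not for up to $ll$ phases; so after $v$'s first phase it makes no \BB call until the next phase $0$, and a moderate-$N$ phase in between can be silent --- your ``residual case'' reappears inside the first partial epoch. Moreover, even in $v$'s single \texttt{new} phase the \BB bound is not unconditional: every neighbor of $w$ that received the message in the same phase as $v$ executes that \BB block as well, and if there are more than about $12\ln n\log\log n$ of them Lemma~\ref{lem:bb} does not apply and the block can fail with probability close to $1$. The paper handles exactly this case: if many stations are \texttt{new} together with $v$, they all draw $myPhase$ uniformly, so by~\eqref{eq:bb} every remaining \GD block of that first epoch has a participant with probability $1-n^{-2}$, and Theorem~\ref{thm:gd} then gives per-phase success probability $1/2$, restoring the $2^{-j}$ charge for the initial partial epoch. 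Relatedly, your per-epoch sentence conflates the two regimes: the $2^{-ll}$ per-epoch bound comes from the phase-$0$ \BB alone and only when it has few participants; when it has many, $v$'s own single \GD gives only $1/2$, and one must fall back on the argument that \emph{all} \GD blocks of the epoch are populated.
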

%\JournalProof{\ProofSix}{
\begin{proof}
	Consider $(x+2) \cdot t_{ph}$ steps, starting from $T(v)$, during which $v$ executes algorithm \GB. It performs at least $x+1$ complete phases. From the definition of the algorithm we can observe that station $v$ executes $a < ll$ phases in state \texttt{new}, then it participates in $b$ full epochs and finally $c < ll$  phases of the last (incomplete) epoch. Hence $v$ participates altogether in the considered time interval in $b+2$ epochs. We will analyse the first epoch, consisting of $a$ phases separately.
	
	Observe that a station can finish its algorithm only at the end of a phase (see lines $6$ and $25$). Define by $\kappa_i$ the number of neighbors of $v$ participating in $i$-th epoch i.e. the number of stations that participate in the $i$-th epoch in all its phases in state \texttt{normal}. 
	
	\begin{enumerate}
		\item Epochs $\{2,3,\dots,b+2\}$. 
		\begin{enumerate}
			\item If $\kappa_i \geq 12 \lceil\log{n}\rceil \log\log{n}$ then the probability that one fixed of the $ll$ procedures $\GD$ is not executed by any station is at most
			\begin{align}\nonumber
			\left(1-\frac{1}{\lceil\log\log{n}\rceil}\right)^{12\lceil\log{n}\rceil\log\log{n}} & 
			\leq\left(\frac{\left(1 - \frac{1}{\lceil \log\log n\rceil}\right)^{\lceil \log\log n \rceil}}{1 - \frac{1}{\lceil \log\log n\rceil}}\right)^{12 \lceil \log n\rceil} \\\nonumber&
			\leq e^{-12 \cdot \lceil \log{n} \rceil } \cdot\left(1 + \frac{1}{\lceil \log\log n\rceil - 1}\right)^{12 \lceil \log n\rceil} \\ &\leq n^{-3},
			\label{eq:bb}
			\end{align}%\Dominik{wyjasnienie}\
			where  $12\lceil \log n \rceil \geq 17 \ln n$ and the inequality $\left(1 + \frac{1}{\lceil \log\log n\rceil - 1}\right)^{12 \lceil \log n\rceil} \leq n^{14}$ is true for $n \geq 32$. Hence with probability at least $n^{-2}$ (by the Union Bound) each $\GD$ has at least one participant. But then by Theorem~\ref{thm:gd} each $\GD$ it is successful with probability at least $1/2$.
			\item Fix any $i$ and assume that $\kappa_i < 12\lceil\log{n}\rceil\log\log{n}$. Observe that exactly $\kappa_i$ stations are taking part in $\BB$ procedure (in line 16) which is executed by all stations in \texttt{normal} state always in the first phase in each epoch. Now using Lemma~\ref{lem:bb} if $\kappa_i < 12\lceil\log{n}\rceil\log\log{n}$, the failure probability of $\BB$ is at most $1/(2\log{n}) \leq 2^{-ll}$.
		\end{enumerate}
		\item Epoch $1$. Denote by $\kappa$ the number of stations that execute procedure $\BB$ in state \texttt{new} together with $v$ (line $13$). Observe that if $\kappa < 12\lceil\log{n}\rceil\log\log{n}$ then the argument is the same as in the previous case which gives success with probability at least $1-2^{-ll}$. In the opposite case all these $\kappa$ stations choose a phase (line $21$) and by~\eqref{eq:bb} each of the $a$ procedures $\GD$ is executed by at least one station with probability at least $n^{-2}$ and hence successful with probability at least $1/2$.  
	\end{enumerate}
	
	We showed that if an epoch $i$ has $\kappa_i \geq 12 \lceil\log{n}\rceil \log\log{n}$ then the probability of success in each phase is at least $1/2 - n^{-2}$. And in the opposite case the probability of success of the entire epoch is
	at least %mkl	
	$1 - 2^{-ll}$. Let $y$ be the number of epochs $i$ with $\kappa_i \geq 12 \lceil\log{n}\rceil \log\log{n}$ and $z = x + 1 - y$. Then, using the independence of the phases, the failure probability is at most:
	\[ (1/2 + n^{-2})^{y\cdot ll} \cdot 2^{-ll \cdot z}  = 2^{-x-1} \cdot (1 + 2\cdot n^{-2})^y \leq 2^{-x},\]
	because $(1 + 2\cdot n^{-2})^{y\cdot ll} \leq 2$ holds already for $n> 10$ (because $y \leq \lceil \log{n}\rceil$ and $ll = \lceil \log\log{n}\rceil $). Now we note that in both cases the probability of success in each considered epoch is at least $2^{-i}$ where $i$ is the number of phases executed by $v$ in considered time interval $[T(v),T(v)+(x+2) \cdot t_{ph}]$. When we multiply the failure probabilities (the successes in epochs are independent) we obtain the failure probability of at most $2^{-x}$.
	
\end{proof}
%}
In Lemma~\ref{lem:dom} we showed local bounds on the difference between the reception times of adjacent vertices. Using the Lemma we can show a global bound on the total number of step until all the nodes receive the message.
%\vspace*{-1mm}
\begin{theorem}
	\label{thm:gd_broadcast}
	If $n\geq 32$ then algorithm \GB completes broadcast:
	\begin{enumerate}
		\InConference{\vspace*{-0.5mm}}
		\item in time $\BigO{(D+\log{n})\log{n}}$,
		\InConference{\vspace*{-0.5mm}}
		\item using $\BigO{\log{n} /\log\log{n}}$ energy per station,
		\InConference{\vspace*{-0.5mm}}
		\item with probability at least $1-2/n$.
	\end{enumerate}
\end{theorem}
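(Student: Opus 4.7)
My plan is to handle the three claims separately, with the energy bound being a direct counting argument on the pseudocode and the time and probability bounds following together from Lemma~\ref{lem:dom} via a propagation argument along a BFS tree. Beginning with energy: inspecting \GB{} iteration by iteration, each outer iteration contributes at most one $\BB$ transmission and at most one \GD{} transmission (the latter because the modified \GD{} transmits only once per participant, as remarked just before Theorem~\ref{thm:gd}). Grouping the $2\lceil\log n\rceil + 2$ iterations into epochs of length $ll = \lceil\log\log n\rceil$, exactly one iteration per epoch has $phase = 0$ (invoking $\BB$) and exactly one has $phase = myPhase$ (invoking \GD), so each of the $\BigO{\log n/\log\log n}$ epochs costs $\BigO{1}$ energy.

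For the time and probability bounds, fix a BFS tree rooted at the originator. For every destination $u$ at depth $d_u \leq D$ let $v_0, v_1, \ldots, v_{d_u} = u$ be its tree path and, for each tree edge, define the random variable $X_i$ as the smallest integer with $|T(v_i) - T(v_{i+1})| \leq (X_i + 2)\, t_{ph}$, setting $X_i = \infty$ if no such integer $\leq 2\lceil\log n\rceil$ exists. Lemma~\ref{lem:dom} gives $\Pro{X_i \geq x} \leq 2^{-x}$ for $1 \leq x \leq 2\lceil\log n\rceil + 2$; in particular $\Pro{X_i = \infty} \leq 1/n^2$, and a union bound over the $\leq n - 1$ tree edges shows that all $X_i$ are finite with probability $\geq 1 - 1/n$. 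Conditioned on this event, the trivial inequality $T(v_{i+1}) \leq T(v_i) + (X_i + 2)\, t_{ph}$ (valid regardless of the order of the two reception times) telescopes along the path to $T(u) \leq \bigl(2 d_u + \sum_{i=0}^{d_u - 1} X_i\bigr) t_{ph}$. A standard MGF/Chernoff estimate for geometric-tailed variables then gives $\Pro{\sum_i X_i \geq c(d_u + \log n)} \leq 1/n^3$ for an absolute constant $c$; union bounding over $n$ destinations contributes at most $1/n^2$ to the failure probability, so with probability at least $1 - 2/n$ every $u$ satisfies $T(u) = \BigO{(d_u + \log n) t_{ph}} = \BigO{(D + \log n) \log n}$.

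The delicate step is the Chernoff estimate, because the $X_i$ along a path are not literally independent: the consecutive variables $X_{i-1}$ and $X_i$ share the randomness of the intermediate station $v_i$. I would address this by arguing that Lemma~\ref{lem:dom} in fact provides its tail bound conditionally on the history $T(v_0), \ldots, T(v_i)$, since its proof only appeals to fresh coin flips of $v_i$ and $v_{i+1}$ after time $\min(T(v_i), T(v_{i+1}))$. The $X_i$ are then stochastically dominated, in the natural path order, by a sequence of i.i.d.\ variables with tail $2^{-x}$, which justifies the Chernoff application and closes the argument.
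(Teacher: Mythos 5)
Your proposal is correct and follows essentially the same route as the paper's proof: the energy bound is the same per-epoch counting argument, and the time/probability bounds use the same ingredients --- a shortest-path (BFS) tree, the per-edge tail bound of Lemma~\ref{lem:dom}, a union bound giving success at every node with probability $1-1/n$, stochastic domination of the reception-time increments by independent $\mathsf{Geo}(1/2)$-type variables, and a concentration bound for their sum followed by a union bound over destinations. Your explicit remark that the domination must be argued conditionally on the history (since consecutive increments share the intermediate station's randomness) is a slightly more careful statement of what the paper handles implicitly by conditioning on the success event $\mathcal{E}$ and invoking Lemma~\ref{lem:dom}, so there is no substantive difference in approach.
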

%\JournalProof{\ProofSeven}{
\begin{proof}
	Similarly as in Section~\ref{sec:bb} we define a tree of shortest paths connecting $u$ to all the other nodes. For any vertex $v \in V$ we denote path $\mathcal{P}_v$ (in the tree) connecting $u$ (the originator) and $v$ and define a vertex $p(v)$ as the second last node on this path. The length of the path satisfies $D_v \leq  D$.  For any node $w$, the probability that some node $p(w)$ receives the message and $w$ does not is by Lemma~\ref{lem:dom} at most $1-1/n^2$ (since in the algorithm each station executes the outer loop $2\log n + 2$ times) and hence by the union bound all the nodes receive the message with probability at least $1-1/n$. Moreover at least one phase is successful with probability at least $1-1/n$ for any node $v$ at every vertex of the path $\mathcal{P}_v$. Call this event $\mathcal{E}$.
	
	We have:
	\begin{equation}
	\label{eqn:tsum}
	T(v) = T(v) - T(v_{D_v}) + \sum_{i=2}^{D_v} T(v_{i}) - T(v_{i-1}) + T(v_{1}).
	\end{equation}
	Thus let us denote $X_i = |T(v_{i}) - T(v_{i-1})|$, for $i=2,\dots,D_v$ and $X_{D_v+1} = |T(v) - T(v_{D_v})|$ and $X_1 = |T(v_{1})|$. Then by~\eqref{eqn:tsum} $T_{v} \leq \sum_{i=1}^{D_v+1} X_i$. Variables $X_i$ are not independent but their distribution is bounded in Lemma~\ref{lem:dom}. 
	Let us define a sequence of independent variables $Y_i \sim \mathsf{Geo(1/2)}$. Conditioned on event $\mathcal{E}$, each variable $X_i$ is stochastically dominated by $(Y_i + 2) \cdot t_{ph}$ by Lemma~\ref{lem:dom}. If $Y = \sum_{i=1}^{D + 2\log n} Y_i$ then by concentration bound for sum of geometric variables~\cite[Theorem 2.3]{janson2017tail} we have:
	\[
	\Pro{Y \geq 4 \cdot \E{Y}} \leq n^{-2}.
	\]
	Hence again by the union bound this holds for all nodes with probability at least $1-1/n$. Thus with probability at least $1-2/n$ event $\mathcal{E}$ takes place and $Y < 4 \cdot \E{Y}$ for all vertices $v$. Hence with probability $1-2/n$ all stations receive the message within time $(D + 2\log n)\cdot t_{ph}$. All stations terminate the algorithm after a most $2\log{n} + 2$ additional phases hence the total time is $\BigO{(D + \log{n})\log{n}}$, which completes the proof of $1$ and $3$. 
	
	The energy complexity follows directly from the fact that the energy used by each station in $\log\log{n}$ consecutive phases is always constant. Hence the energy is at most $\BigO{\log{n}/\log\log n}$
\end{proof}	
\section{Lower bound}\label{lower2}
%\InConference{\vspace*{-1mm}}
Our algorithm $\GGB$ has a surprisingly large multiplicative factor $n^{1/\varphi}$. In this section we want to prove that such a factor is sometimes necessary by showing that time $\Omega(n^{1/\varphi}\cdot \varphi)$ is needed for any algorithm using energy $\varphi$. 
%\InConference{\vspace*{-2mm}}
\begin{theorem}
	\label{thm:lower}
	For any randomized broadcast algorithm $\mathcal{A}$ successful with probability at least $(1-e^{-1})/2 $ in all multi-hop radio networks there exists a graph $G$ with constant diameter and $n$ nodes, such that if $T$ is the runtime and $E$ is the energy used by the algorithm then the expected value of $E\cdot \log{(T/E)}$ is $\Omega(\log n)$.
\end{theorem}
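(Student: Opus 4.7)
The plan is to build a constant-diameter graph $G$ made up of many independent ``two-sender / one-receiver'' gadgets attached to a common source, and to show that succeeding on $G$ forces the common schedule distribution used by senders to have support of size $\Omega(n)$; the bound on $E\log(T/E)$ will then follow by counting subsets of $[T]$.

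Concretely, let $k=\lfloor(n-1)/3\rfloor$ and let $G$ consist of an originator $u$, two senders $a_i,b_i$ and one receiver $w_i$ per gadget $i\in\{1,\dots,k\}$, with edges $\{u,a_i\}$, $\{u,b_i\}$, $\{a_i,w_i\}$, $\{b_i,w_i\}$ and no others (pad to $n$ vertices by a few leaves attached to $u$). The diameter is at most $4$. In round $1$ the originator transmits, so every $a_i,b_i$ receives the message simultaneously and sits in an identical local situation (knows $n$ and has received one transmission from $u$). Conditioning on the random bits of $u$ (and of the leaves), each of these $2k$ senders therefore samples a schedule $S\subseteq[T]$ of size $\leq E$ from one common distribution $\mathcal{D}$ using independent randomness, so the $2k$ schedules are i.i.d.\ under this conditioning. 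Because $a_i,b_i$ share no neighbor with any other gadget, receiver $w_i$ hears the message iff in some slot exactly one of $\{a_i,b_i\}$ transmits, i.e.\ iff $S_{a_i}\neq S_{b_i}$; and the events $\{S_{a_i}\neq S_{b_i}\}$ across the $k$ gadgets are mutually independent because they involve disjoint sets of random bits.

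Writing $\mathcal{C}:=\sum_{s}\mathcal{D}(s)^2$ for the collision probability of $\mathcal{D}$, this gives $\Pro{\text{broadcast succeeds on }G\mid u}=(1-\mathcal{C})^{k}$. The hypothesized overall success probability $(1-e^{-1})/2$ combined with a reverse-Markov argument on the outer expectation over the conditioning, together with $k=\Theta(n)$, forces $\mathcal{C}=\BigO{1/n}$ with at least constant probability. Cauchy-Schwarz in the form $\mathcal{C}\geq 1/|\mathrm{supp}(\mathcal{D})|$ then yields $|\mathrm{supp}(\mathcal{D})|=\Omega(n)$. On the other hand every schedule in the support is a subset of $[T]$ of size at most $E$, so $|\mathrm{supp}(\mathcal{D})|\leq \sum_{j=0}^{E}\binom{T}{j}\leq 2(eT/E)^{E}$. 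Comparing the two bounds gives $E\log(eT/E)\geq \log n - \BigO{1}$, hence $E\log(T/E)\geq \log n - \BigO{E}$, which is already $\Omega(\log n)$ whenever $E$ is a sufficiently small constant fraction of $\log n$.

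The main obstacle I foresee is the complementary regime $E=\Omega(\log n)$, where the $\BigO{E}$ slack would swallow the $\log n$ term: there one has to argue directly from $\binom{T}{E}\geq \Omega(n)$, together with $T\geq E$, that $T/E$ must exceed some universal constant strictly greater than $1$, so that $\log(T/E)=\Omega(1)$ and $E\log(T/E)=\Omega(E)=\Omega(\log n)$. The pointwise inequality is then upgraded to the expectation in the theorem statement: it is immediate if $T,E$ are interpreted as worst-case bounds of $\mathcal{A}$, and in general it follows because the support-size conclusion $|\mathrm{supp}(\mathcal{D})|=\Omega(n)$ already holds with constant probability over the conditioning, and on that event the deterministic bound $E\log(T/E)=\Omega(\log n)$ applies, yielding $\E{E\log(T/E)}=\Omega(\log n)$.
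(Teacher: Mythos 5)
Your construction is genuinely different from the paper's: the paper randomizes the \emph{graph} (a fixed-point-free permutation matching each receiver to two senders) and invokes Yao's minimax principle, whereas you fix a single gadget graph and randomize only over the algorithm's coins. The gadget idea, the reduction to the collision probability of a schedule distribution, and the counting bound $\sum_{j\le E}\binom{T}{j}\le(eT/E)^E$ are the right ingredients. However, there is a genuine gap at the step where you declare the $2k$ schedules i.i.d.\ and the events $\{S_{a_i}\neq S_{b_i}\}$ mutually independent. Every sender $a_i,b_i$ has the originator $u$ as a neighbor, and the model permits $u$ to transmit adaptively: $u$'s later transmissions depend on the feedback $u$ hears, which depends on the transmissions of \emph{all} $2k$ senders. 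Conditioning on ``the random bits of $u$'' does not determine $u$'s transmission pattern, so the realized schedule of $a_i$ (which may depend on what $a_i$ hears from $u$ while listening) is a function of all the tapes in the system, not of $a_i$'s tape alone. Hence neither the existence of a single product distribution $\mathcal{D}$ nor the factorization $(1-\mathcal{C})^k$ is justified. This circularity is precisely what the paper's design avoids: after fixing all of the algorithm's randomness via Yao, the execution restricted to $\{u\}\cup S$ is identical for every graph in the family as long as the receivers stay silent, so each sender's pattern is a fixed string, and the only remaining randomness is the matching deciding which two senders share a receiver; the failure probability is then computed over the random graph with no feedback loop. To repair your argument you would need either to randomize the graph in the same way, or to prove a coupling of the senders' realized schedules to genuinely independent draws despite the channel feedback through $u$; as written the independence is asserted, not proved.

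A secondary issue: your fallback for the regime $E=\Omega(\log n)$ --- that $\binom{T}{E}=\Omega(n)$ together with $T\ge E$ forces $T/E$ to exceed a universal constant greater than $1$ --- is false. Taking $T=E+1$ with $E=\lceil\log n\rceil$ gives $\sum_{j\le E}\binom{T}{j}=2^{E+1}-1\ge n$ admissible schedules while $E\log(T/E)=E\log(1+1/E)=O(1)$. The counting argument therefore yields only $E\log(eT/E)=\Omega(\log n)$, not $E\log(T/E)=\Omega(\log n)$, in that regime. (The paper's own proof likewise establishes only that $(eT/E)^E=\Omega(n)$ and passes to the stated form without comment, so this slack is inherited from the source; but your proposed patch for it does not work.) On the positive side, your closing step --- deriving the bound on the expectation from a constant-probability event on which the pointwise bound holds --- is sound and is, if anything, cleaner than the paper's appeal to the expected value of $(eT/E)^E$.
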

%\JournalProof{\ProofEight}
{
	\begin{proof}
		%	\InConference{\vspace*{-1mm}}
		We want to prove the theorem using Yao's minimax principle~\cite{yao}. Take any deterministic algorithm $\mathcal{A}$ and consider any fixed $n$. We define a family of graphs $\mathcal{G}$ on $n$ vertices. For simplicity assume that $n-1$ is divisible by $2$. For any $G_{\pi}\in\mathcal{G}$ we have $G_{\pi} = (\{u\}\cup S \cup X, E_s \cup E_{\pi})$, where $|S| = |X| = \frac12(n-1)$. Node $u$ is the originator of the message in each $G_{\pi}\in\mathcal{G}$. Also in each $G_{\pi} \in \mathcal{G}$, set $E_s$ is defined as $E_s = \{(u,s) : s\in S\}$ (i.e., vertices $\{u\} \cup S$ are forming a star with $u$ at its center). Graphs from $\mathcal{G}$ differ on the remaining edges from $E_{\pi}$ in the following way. Let $S = \{s_1,s_2,\dots,s_{(n-1)/2}\}$ and $X = \{x_1,x_2,\dots,x_{(n-1)/2}\}$. Take the set $\Pi$ of all permutations $\pi: [(n-1)/2] \rightarrow [(n-1)/2]$, such that $\pi(i) \neq i$, for any $i\in [(n-1)/2]$. For all $\pi \in \Pi$ we define $E_{\pi}$ as $\{(s_i,x_i),(x_i,s_{\pi(i)}) : i\in \{1,2,\dots,(n-1)/2\}\}$. Now consider algorithm $\mathcal{A}$ on graph $G_{\pi}$ taken uniformly at random from $\mathcal{G}$. Let $T$ denote the time of the algorithm and $E$ denote maximum energy used by the stations. Consider the total number of possible broadcasting patterns of length $T$ with at most $E$ transmissions:
		%\InConference{\vspace*{-1mm}}
		\[
		\alpha(T,E) = \sum_{i = 0}^{E} {T\choose i} \leq \sum_{i=0}^E \frac{T^i}{i!} = \sum_{i=0}^E \frac{E^i}{i!}\cdot \left(\frac{T}{E}\right)^i \leq \left(\frac{T}{E}\right)^E \sum_{i=0}^\infty \frac{E^i}{i!} = \left(\frac{eT}{E}\right)^E.
		\]
		
		Now, assume that in algorithm $\mathcal{A}$, the last expression is upper bounded by $n/100$. Then also $\alpha(t,x) \leq  n/100$.  
		Assume that $n/100$ is an integer. Since there are at most $n/100$ broadcasting patterns there are then at least $\frac12 (n-1) - \frac{9n}{100}  \geq \frac13 n $ %\MAREK{Zmienilem $1/10$ na  $9/100$ bo chyba latwiej zrozumiec} 
		stations from $S$ which have the same pattern as at least $10$ other stations from $S$. Call the set of these $\frac13 n$ stations $\hat{S}$. Observe by the construction of the graph that if the stations with the same patterns also have a common neighbor in $x\in X$ then they cannot break the symmetry because they receive the same feedback from the channel in each time step. And then the message cannot be delivered to this neighbor $x$. We want to lower bound the probability that two stations from $\hat{S}$ have a common neighbor in $X$. Set $\mathcal{G}$ is defined in such a way that in a graph $G_{\pi}\in\mathcal{G}$ chosen uniformly at random, for each $s \in S$, its corresponding $\pi(s)$ can be seen as a vertex taken uniformly from $S\setminus\{s\}$. Thus, for any $s \in \hat{S}$, the probability that $p(s)$ is using different broadcasting pattern is at most $1- 5/(n-1)$. Hence with probability at most,
		$\left(1-\frac{5}{n-1}\right)^{n/3} \leq 1/e,$
		for each station $s\in S$ its corresponding $p(s)$ station uses a different pattern. Hence under the chosen probability distribution over the set of graphs, with  probability at least $1-e^{-1}$, some node does not receive the message. Hence if we define as the cost of the algorithm the expression $(eT/E)^E$ then its expected value is $\Omega(n)$. By Yao's principle~\cite[Theorem 3]{yao} for Monte Carlo algorithms, for any randomized algorithm with error probability at most $(1-e^{-1})/2$ there exists graph $G_{\pi}\in\mathcal{G}$ such that the expected value of $(eT/E)^E$ is $\Omega(n)$.

		%This shows that for any algorithm there exists a graph for which this algorithm will not work in time $T$ and energy $E$ with probability more than constant. 
		
		%Thus $\frac{(eT/E)^E}{\sqrt{\pi E}} \geq n/100$, which gives $E\log{(T/E)}  = \Omega(\log n)$.
		
		%Denote the set of these vertices by $R$. Now we would like to show that with constant probability there exists a pair of stations in $u,v \in K$ connected to the same $x \in X$ such that both $u,v \in R$. This will show that the algorithm cannot work with high probability in this time and memory. Note that the matching chosen during the construction of $K$ was chosen uniformly at random.  Consider a process in which we assign in each steps to one of the vertices $u \in R$ another node $v$ chosen uniformly at random (from previously not chosen vertices). For $i< \frac{1}{2} n^{9/10}$, probability $p_i$ that in $i$-th step the chosen node is in $R$ is at least $|R|/(2(n - i)) \geq \frac{1}{2 n^{1/10}} = p$. The events that a chosen vertex is not in $R$ are not independent but we can observe that they are stochastically dominated by a 
		%sequence of independent Bernoulli random variables with success probability $p$. Hence with probability at most
		
		%\[
		%\left(1- \frac{1}{2n^{1/10}}\right)^{n^{9/10}/2} = O(1),
		%\]
		%none of the pairs belong to $R$. Hence with constant probability some two nodes $u,v \in R$ will be connected to the same $x \in X$ and the claim follows. 

	\end{proof}
}
%\InConference{\vspace*{-3mm}}
The following Corollary lower bounds the time complexity of any algorithm using the asymptotically same energy as the algorithms presented in Section~\ref{green2}.
%\InConference{\vspace*{-1mm}}
\begin{corollary}
	\label{cor:lower}
	Any randomized algorithm completing broadcast in any graph with probability at least $(1-e^{-1})/2$
	\begin{enumerate}
		%		\InConference{\vspace*{-1mm}}
		\item using energy at most $\varphi$ needs expected time $\Omega(n^{1/\varphi}\cdot \varphi)$,
		%	\InConference{\vspace*{-1mm}}
		\item using energy at most $\frac{\log{n}}{c\log\log{n}}$ for any constant $c$ needs expected time $\Omega\left(\frac{\log^{c+1}n}{\log\log n}\right)$.
	\end{enumerate}
\end{corollary}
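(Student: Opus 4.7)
My approach is to reopen the counting step inside the proof of Theorem~\ref{thm:lower} rather than treat the theorem as a black box, because that counting step is pointwise in $T$ and $E$ and therefore delivers the exact exponent $1/\varphi$ demanded by the corollary.

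For part (1), consider any randomized algorithm that uses at most $\varphi$ units of energy per station and succeeds with probability at least $(1-e^{-1})/2$. Applying Yao's minimax exactly as in the proof of Theorem~\ref{thm:lower}, it is enough to analyse a deterministic algorithm on a random graph $G_{\pi}\in\mathcal{G}$. The core of that analysis is the implication $\alpha(T,E) < n/100 \Rightarrow$ at least $n/3$ stations of $S$ share a broadcast pattern with $\geq 10$ others and hence collide at their common $X$-neighbour with probability at least $1-e^{-1}$. Because the cap $E\leq\varphi$ holds in every execution and $\alpha$ is monotone in its second argument, we have $\alpha(T,E)\leq\alpha(T,\varphi)\leq (eT/\varphi)^{\varphi}$. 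Thus the success requirement forces $(eT/\varphi)^{\varphi} \geq n/100$, which rearranges to
\[
T \;\geq\; \frac{\varphi}{e}\,(n/100)^{1/\varphi} \;=\; \Omega\!\left(\varphi\cdot n^{1/\varphi}\right),
\]
the hidden constant absorbing the harmless factor $100^{-1/\varphi}\geq 1/100$ for $\varphi\geq 1$.

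For part (2), I plug $\varphi = \log n/(c\log\log n)$ into the bound from part (1). A direct calculation gives $n^{1/\varphi} = 2^{c\log\log n} = (\log n)^{c}$, and consequently
\[
\varphi\cdot n^{1/\varphi} \;=\; \frac{\log n}{c\log\log n}\cdot(\log n)^{c} \;=\; \Theta\!\left(\frac{\log^{c+1}n}{\log\log n}\right),
\]
which is exactly the advertised lower bound.

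The main difficulty I anticipate is the reconciliation of the \emph{expected-value} statement of Theorem~\ref{thm:lower} with the \emph{pointwise} pattern-count bound I want to use. Trying to derive the corollary purely from $\E{E\log(T/E)}=\Omega(\log n)$ by substituting $E\leq\varphi$ and applying Jensen's inequality to the concave function $\log$ loses a constant in the exponent and yields only $\varphi\cdot n^{\Omega(1/\varphi)}$, which is strictly weaker than $\varphi\cdot n^{1/\varphi}$ unless that constant equals one. Going back inside the proof of Theorem~\ref{thm:lower} and reusing its pointwise inequality $\alpha(T,E)\geq n/100$ with $E$ replaced by the deterministic cap $\varphi$ is the cleanest way to avoid this loss; the remaining check is merely that the monotonicity of $\alpha$ and the collision-based failure argument are unaffected by tightening the energy cap, which they plainly are.
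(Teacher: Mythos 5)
Your proposal is correct and follows essentially the same route the paper intends: the corollary is stated without a separate proof because it is exactly the instantiation of the counting bound $(eT/E)^{E}=\Omega(n)$ from the proof of Theorem~\ref{thm:lower} with the deterministic cap $E\leq\varphi$, which gives $T=\Omega(\varphi\, n^{1/\varphi})$ and, for $\varphi=\log n/(c\log\log n)$, the bound $\Omega(\log^{c+1}n/\log\log n)$. Your observation that one should reuse the pointwise pattern-count inequality rather than Jensen on the expected-value form $\E{E\log(T/E)}=\Omega(\log n)$ is a sensible and correct refinement, not a departure from the paper's argument.
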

%\InConference{\vspace*{-1mm}}
This shows that for graphs with constant diameter our algorithms achieve almost optimal tradeoff between time and energy. 
This also shows that our $\GB$ is asymptotically optimal in terms of energetic efficiency among all algorithms with time polylogarithmic in $n$.

\begin{fact}
	\label{fct:collision}
	For any discrete positive distribution $\mathcal{D}$ and two independent variables $X,Y \sim \mathcal{D}$ if $\E{X} = \E{Y}  = k$, for even $k \geq  1$ then $\Pro{X = Y} \geq 1/(2k)$.
\end{fact}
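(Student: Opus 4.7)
Write $p_i = \Pro{X = i}$ for each $i$ in the (positive integer) support of $\mathcal{D}$, so that $\Pro{X = Y} = \sum_i p_i^2$. My plan is to combine Markov's inequality, which confines most of the mass of $\mathcal{D}$ to a window of size proportional to $k$, with a Cauchy--Schwarz lower bound on the sum of squares restricted to that window.

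Concretely, I would first apply Markov's inequality to obtain $\Pro{X \geq 2k} \leq 1/2$, so the head $H = \{1, 2, \ldots, 2k\}$ satisfies $\sum_{i \in H} p_i \geq 1/2$. Since $|H| \leq 2k$, the Cauchy--Schwarz inequality (equivalently, the power-mean inequality) yields
\[
\sum_i p_i^2 \;\geq\; \sum_{i \in H} p_i^2 \;\geq\; \frac{1}{|H|}\Bigl(\sum_{i \in H} p_i\Bigr)^{2} \;\geq\; \frac{1}{2k}\cdot\Bigl(\sum_{i \in H} p_i\Bigr)^{2}.
\]
Plugging in $\sum_{i \in H} p_i \geq 1/2$ directly gives $\Pro{X = Y} \geq 1/(8k)$, which has the right order in $k$ but a worse constant than claimed.

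To promote this crude estimate to the stated $1/(2k)$, I would use the parity hypothesis ``$k$ even'' to remove off-by-one slack and split into two regimes. If $\mathcal{D}$ is ``concentrated'', in the sense that $\max_i p_i \geq 1/\sqrt{2k}$, then already $\sum_i p_i^2 \geq \max_i p_i^2 \geq 1/(2k)$ from a single term. Otherwise $\mathcal{D}$ is ``spread'', and one has to squeeze more out of Cauchy--Schwarz on the window $H$: for even $k$ the set $\{1, \ldots, 2k\}$ is symmetric about the mean $k+1/2$, which enables a pairing/rearrangement argument showing that in the spread case the mass $\sum_{i\in H} p_i$ is not merely $\geq 1/2$ but is close to $1$ (any mass pushed out past $2k$ can be reflected inside the window at the cost of a controlled increase of $\sum p_i^2$, with the constants lining up precisely because $2k$ is exactly twice the mean).

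The main obstacle is this last regime: the delicate point is to show that whenever Markov is loose (so that $\sum_{i \in H} p_i$ barely exceeds $1/2$), the distribution is so dispersed that $\sum_i p_i^2$ still exceeds $1/(2k)$ through the rearrangement argument, while simultaneously the concentrated regime is handled by the single-term bound. Gluing the two regimes at the threshold $\max_i p_i = 1/\sqrt{2k}$ is the step where the parity hypothesis has to be used carefully to avoid losing a constant factor; once this is done, the fact follows.
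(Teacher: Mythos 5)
The rigorous half of your argument is fine, and it is in fact the defensible version of what the paper does: Markov gives $\Pro{X \geq 2k} \leq 1/2$, so the window $H=\{1,\dots,2k\}$ carries mass at least $1/2$, and Cauchy--Schwarz over $H$ yields $\Pro{X=Y} \geq \sum_{i\in H}p_i^2 \geq \frac{1}{2k}\bigl(\frac12\bigr)^2 = \frac{1}{8k}$. The paper instead asserts, citing Markov, that $\sum_{i\leq k/2}p_i \geq 1/2$ and applies Cauchy--Schwarz over a window of size $k/2$, which is what produces the constant $1/(2k)$; but that assertion is not an instance of Markov's inequality and is false in general (take $X \equiv k$), so the paper's constant comes from an invalid step rather than from a trick you failed to find.

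The genuine gap is your second half: the promotion from $1/(8k)$ to $1/(2k)$ is only a plan, and no pairing/reflection argument can complete it, because the bound with constant $1/2$ is simply not true for large even $k$. Consider the triangular distribution $p_i = \frac{2(M-i)}{M(M-1)}$ on $i=1,\dots,M-1$ with $M=3k-1$: its mean is exactly $\frac{M+1}{3}=k$, while $\Pro{X=Y}=\sum_i p_i^2=\frac{2(2M-1)}{3M(M-1)}=\frac{2(2k-1)}{(3k-1)(3k-2)}$, which is strictly below $\frac{1}{2k}$ for every even $k\geq 6$ and tends to $\frac{4}{9k}$. In this example $\max_i p_i \approx \frac{2}{3k}$ and the mass inside your window $H$ is about $8/9$, so neither your ``concentrated'' branch nor the plan of pushing the window mass toward $1$ can rescue the constant; indeed Cauchy--Schwarz over at most $2k$ atoms can only reach $\frac{1}{2k}$ if all the mass is uniform on exactly $2k$ distinct positive integers, whose mean is at least $k+\frac12>k$. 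So you should stop at the honestly provable $\Theta(1/k)$ bound (your $1/(8k)$, possibly with an optimized window length); a bound of this order is all that the Fact is used for in Theorem~\ref{thm:energy2}, at the cost of adjusting constants there.
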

\begin{proof}
	Denote $\Pro{X = i} = p_i$. We know by Markov inequality that $\sum_{i = 1}^{k/2} p_i \geq 1/2$. By Cauchy-Schwarz inequality:
	\[
	\left(\sum_{i=1}^{k/2} p_i\right)^2 \leq \left(\sum_{i=1}^{k/2} p_i^2\right)\left(\sum_{i=1}^{k/2} 1 \right)
	\] 	
	\[
	\frac{1}{4} \leq \left(\sum_{i=1}^{k/2} p_i^2\right)\frac{k}{2} = \frac{k}{2}\sum_{i=1}^{k/2}\Pro{X = i, Y = i}
	\] 	
	
\end{proof}
\begin{figure}
	\centering
	\includegraphics[width=\linewidth]{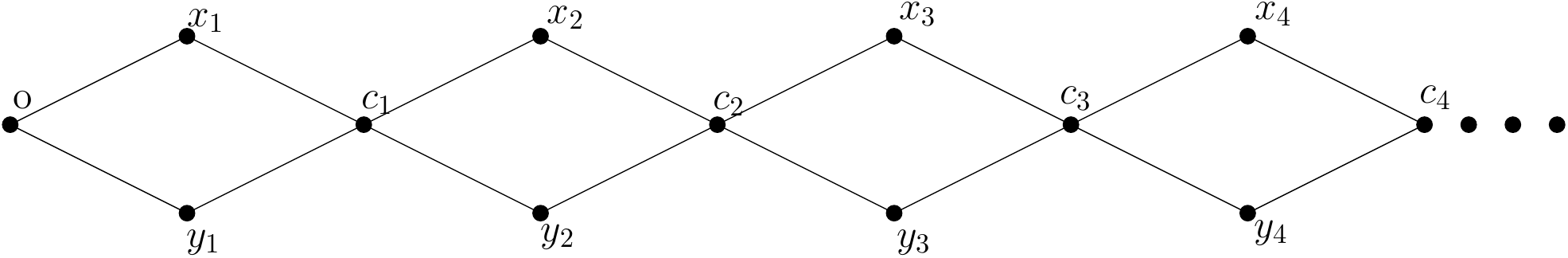}
	\caption{Graph used in proof of Theorem~\ref{thm:energy2}.}
	\label{fig:lancuch}
\end{figure}
\begin{theorem}
	\label{thm:energy2}
	Any randomized algorithm completing broadcast in any graph with probability at least $1 - 1/n$ using at most $2$ energy needs expected time $\Omega(n^{3/2})$ on some graphs with diameter $D = \Theta(n)$.
\end{theorem}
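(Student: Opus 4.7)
The plan is to construct a specific hard graph $G$ and apply a pattern-collision argument to every link. Take a backbone path $v_0, v_1, \dots, v_m$ with $m = \Theta(n)$, and for each $0 \le i < m$ attach two parallel anonymous ``helper'' vertices $a_i, b_i$, both adjacent to $v_i$ and $v_{i+1}$ and to nothing else (Figure~\ref{fig:lancuch}). The broadcast source is $v_0$; the graph has $\Theta(n)$ vertices and diameter $2m = \Theta(n)$.

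Inside every link $i$ the pair $(a_i, b_i)$ has identical neighborhoods, so any anonymous randomized algorithm executes the same protocol on both helpers, and their transmission patterns $P^{a_i}, P^{b_i}$ (each a subset of at most two rounds, since the energy budget is $\le 2$) are i.i.d.\ conditional on the history. The message can cross link $i$ only if some round contains exactly one transmission from $\{a_i, b_i\}$, which fails precisely when $P^{a_i} = P^{b_i}$ as sets. If $\delta_i$ denotes the window during which a helper's transmission can actually contribute to the propagation through link $i$ (from the round $v_i$ first transmits until $v_{i+1}$ first receives), then $P^{a_i}$ takes values in a set of size at most $1 + \delta_i + \binom{\delta_i}{2} = O(\delta_i^2)$. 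Identifying patterns with positive integers and applying Fact~\ref{fct:collision} yields $\Pro{P^{a_i} = P^{b_i}} \ge \Omega(1/\delta_i^2)$.

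Requiring the overall success probability to be at least $1 - 1/n$ and union-bounding over the $m$ links gives $\sum_{i=1}^m 1/\delta_i^2 = O(1/n)$. Because the links are traversed sequentially, $\sum_i \delta_i \le T$; the power-mean inequality $M_1(\delta) \ge M_{-2}(\delta)$ then implies
\[
T \;\ge\; m \cdot \sqrt{\,m \big/ \textstyle\sum_{i=1}^{m} 1/\delta_i^2\,} \;\ge\; m \sqrt{m n} \;=\; \sqrt{m^3 n},
\]
which for $m = \Theta(n)$ is $\Omega(n^2)$, and in particular at least $\Omega(n^{3/2})$, as required.

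The delicate step is justifying that the effective window $\delta_i$ for link $i$ is bounded by the local propagation time $\tau_{i+1}-\tau_i$ rather than by the full remaining global time $T - \tau_i$. This has to be argued because the algorithm controls the helpers' pattern distribution and could in principle support it on the entire suffix $[\tau_i+1, T]$. The key observation is that transmissions by $a_i, b_i$ after $v_{i+1}$ has already received the message do not affect the success of link $i$, so one can couple the algorithm with a modified variant whose helpers are silenced past the local deadline without changing the pattern-coincidence probability. Alternatively, one may use a balls-into-bins calculation to show that the expected first singleton in link $i$ is $\Theta(\delta_i)$ and combine it with a concentration bound to control the sum $\sum_i \delta_i$ in expectation.
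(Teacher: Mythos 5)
Your high-level strategy---a chain of anonymous, symmetric helper pairs whose transmission patterns must disagree for the message to advance, combined with a collision-probability bound---is the same as the paper's, but three steps do not hold up. First, your graph as described keeps the backbone edges $(v_i,v_{i+1})$, so $v_{i+1}$ receives the message directly from the first transmission of $v_i$ (at that moment $a_i,b_i$ cannot yet transmit, since they have not received the message), and the helpers never obstruct anything; the paper's chain (Figure~\ref{fig:lancuch}) removes this bypass so that each $c$-node is reachable \emph{only} through its symmetric pair. Second, and more importantly, the inequality $\Pro{P^{a_i}=P^{b_i}}\geq\Omega(1/\delta_i^2)$ is not justified: $\delta_i$ is a \emph{realized} random delay, not the support size of the pattern distribution, and the distribution may well be supported on the entire remaining horizon while the realized crossing happens quickly. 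You correctly flag this as the delicate step, but neither proposed fix closes it---silencing the helpers after the local deadline changes the event whose probability you need (agreement of the truncated patterns is neither necessary nor sufficient for the failure of link $i$). The paper's bridge between delay and collision is Fact~\ref{fct:collision}, which lower-bounds the collision probability by $1/(2\E{X})$, i.e., in terms of the \emph{expected} first-transmission time; this yields the clean dichotomy ``either $\E{T^{(1)}}\geq\sqrt{n}/2$, giving total expected delay $\Omega(n^{3/2})$ along the chain, or the first transmissions of a pair collide with probability at least $1/\sqrt{n}$,'' and a second, analogous dichotomy on the second transmission finishes the argument.

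Third, your union bound runs in the wrong direction: from overall success $1-1/n$ you may conclude only $\max_i\Pro{\mathrm{fail}_i}\leq 1/n$ (you would need the per-link failure events to be disjoint or independent to deduce $\sum_i\Pro{\mathrm{fail}_i}\leq 1/n$), so the constraint $\sum_i 1/\delta_i^2=O(1/n)$ is unsupported. This is precisely what inflates your conclusion to $\Omega(n^2)$, a bound strictly stronger than Theorem~\ref{thm:energy2} that your argument does not actually establish. The paper instead shows that with constant probability there are $\Omega(\sqrt{n})$ conflicting pairs after the first transmission, and then charges each such pair either a failure probability of order $n^{-3/2}$ (so that only $\sqrt{n}/4$ pairs can afford it without violating the $1-1/n$ guarantee) or an $\Omega(n)$ delay with probability at least $1/4$; summing these delays over $\Omega(\sqrt{n})$ links of the chain gives $\Omega(n^{3/2})$. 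To repair your write-up you would need the paper's bypass-free graph, the expectation-based form of the collision bound, and the two-stage case analysis over the first and second transmissions.
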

\begin{proof}
	In the proof we use the ``chain'' graph $C_{n}$ (see Figure~\ref{fig:lancuch}) of diameter $n/3$ (assume for simplicity that $n$ is divisible by $24$ and that $\sqrt{n}$ is a natural number divisible by $4$). Node $o$ is the originator of the message in the broadcast problem. 
	
	Take an algorithm $\mathcal{A}$ that solves global broadcast with probability at least $1 - 1/n$ on any graph and consider its execution on graph $C_{n}$. Take any pair $x, y$ in the graph. The stations do not have identifiers and cannot transmit before receiving the message hence their initial position is symmetric at the moment of receiving the message for the first time. We can denote $T_x^{(1)}$ and $T_y^{(1)}$ as the time of the first transmission of $x$ and $y$ respectively (time measured as the number of steps after receiving the message). First observe that if $\E{T_{x}^{(1)}} \geq \sqrt{n}/2$ then this means that every node on average waits $\sqrt{n}/2$ steps until it transmits for the first time after receiving the message. Hence $\sum_{i = 1}^{n/3} \E{T_{c_i}} \geq n^{3/2} / 6$ and the delay caused by the $c$-nodes is $\Omega(n^{3/2})$. Thus assume that $\E{T_{x}^{(1)}} \leq \sqrt{n}/2$. Then by Fact~\ref{fct:collision} nodes $x_i$ and $y_i$ transmit for the first time in the same slot with probability at least $1/\sqrt{n}$. Thus by Chernoff bound if $n \geq 64$ then with probability at least $1/2$ we have at least $\sqrt{n}/2$ pairs $x,y$ that conflicted in the first transmission. We call such pairs \emph{conflicting pairs}.
	
	Since the algorithm works with probability at least $1-1/n$ hence the conflict between all such conflicting pairs has to be resolved with high probability using the second transmission as otherwise the algorithm fails to deliver the message to some $c$ node.
	
	Observe that if $x$ and $y$ made the first transmission in the same first slot then they are still in a symmetric position and the distribution of their second transmission is also the same. Let random variables $T_x^{(2)}$ and $T_y^{(2)}$ denote the times of the second transmission (measured as the number of steps after the first one).  Assume that $\Pro{T_x^{(2)}\leq n/4} \geq  1/2$ and denote by event $L$ the event that $T_x^{(2)}\leq n/4$ and $T_y^{(2)}\leq n/4$. Note that by Fact~\ref{fct:collision} $\Pro{T_x^{(2)} = T_y^{(2)} | L} \geq 4/n$ and the probability of failure (not delivering the message to node $c$) in this case is at least $4/(n^{3/2})$ (because collision in the second slot is independent from the collision in the first slot). This means that we can have at most $\sqrt{n}/4$ such conflicting pairs as otherwise it would violate the guarantee of the algorithm. Hence for the remaining $\sqrt{n}/4$ conflicting pairs $x,y$ we have $\Pro{T_x^{(2)}\leq n/4} \leq  1/2$ but then by the independence of $T_x^{(2)}$ and $T_y^{(2)}$ we have that $\Pro{T_x^{(2)} \geq n/4, T_y^{(2)} \geq n/4} \geq 1/4$. This shows that the number of steps between the reception of the message by $x$ and $y$ and the reception by corresponding $c$-node is $\Omega(n)$ with probability at least $1/4$ for at least $\sqrt{n}/4$ conflicting pairs. Thus the total expected time of the algorithm is $\Omega(n^{3/2})$.
\end{proof}
Our lower bound shows separation between the models with known and unknown topology. If the topology is known then it is possible to solve broadcast in time linear in $D$ whereas for unknown topology $\Omega(n^{3/2})$ is required for energy $2$ and diameter $\Theta(n)$. Hence in the unknown topology model it is impossible to design an algorithm working in time $O(D + n^{1/\varphi} \varphi)$ in the full range of values of $\varphi$. An interesting future direction is to generalize the lower bound for any $\varphi$ and obtain a bound of the form $\Omega(D n^{1/\varphi})$.
\bibliographystyle{abbrv}
\bibliography{bibliography} %biblio ../journals-series-abrv,

\end{document}